\definecolor{colorref}{rgb}{0.4648,0,0} 
\definecolor{colorcite}{rgb}{0,0.2902,0.1765}
\newtheorem{proposition}{Proposition}
\newtheorem{subproposition}{Proposition}[proposition]
\newtheorem{lemma}{Lemma}
\newtheorem{sublemma}{Lemma}[lemma]
\newtheorem{Cor}{Corollary}
\newcommand{\setn}{\mathcal{N}}
\newcommand{\setm}{\mathcal{M}}
\newcommand{\tl}{T_{\mathrm{L}}}
\newcommand{\ta}{T_{\mathrm{A}}}
\newcommand{\Rmnum}[1]{\uppercase\expandafter{\romannumeral #1}}
\newcommand{\signal}[1]{{\boldsymbol{#1}}}
\newcommand{\dprime}{{\prime\prime}}
\newcommand{\real}{{\mathbb R}}
\newcommand{\innerprod}[2]{\left\langle{#1},{#2}\right\rangle}
\newtheorem{definition}{Definition}
\newtheorem{remark}{Remark}
\newtheorem{fact}{Fact}
\newtheorem{subfact}{Fact}[fact]
\newtheorem{example}{Example}
\newcommand{\Natural}{{\mathbb N}}
\newcommand{\refeq}[1]{(\ref{#1})}
\newif\iftodo   
\newif\iftodoshort  
\begin{document}
\title{Elementary Properties of Positive \\Concave Mappings with Applications \\to Network Planning and Optimization}

\author{\IEEEauthorblockN{Renato L. G. Cavalcante, \IEEEmembership{Member, IEEE}, Yuxiang Shen, and S\l{}awomir Sta\'nczak, \IEEEmembership{Senior Member, IEEE} \thanks{This work was partially supported by the Deutsche Forschungsgemeinschaft (DFG) under Grant STA 864\slash9-1. Copyright (c) 2015 IEEE. Personal use of this material is permitted. However, permission to use this material for any other purposes must be obtained from the IEEE by sending a request to pubs-permissions@ieee.org}\\
\IEEEauthorblockA{Fraunhofer Heinrich Hertz Institute / Technical University of Berlin \\
Berlin, Germany \\ 
Email: renato.cavalcante@hhi.fraunhofer.de, yuxiangshenee@gmail.com, slawomir.stanczak@hhi.fraunhofer.de}}}

\maketitle

\begin{abstract}
This study presents novel methods for computing fixed points of positive concave mappings and for characterizing the existence of fixed points. These methods are particularly important in planning and optimization tasks in wireless networks. For example, previous studies have shown that the feasibility of a network design can be quickly evaluated by computing the fixed point of a concave mapping that is constructed based on many environmental and network control parameters such as the position of base stations, channel conditions, and antenna tilts. To address this and more general problems, given a positive concave mapping, we show two alternative but equivalent ways to construct a matrix that is guaranteed to have spectral radius strictly smaller than one if the mapping has a fixed point. This matrix is then used to build a new mapping that preserves the fixed point of the original positive concave mapping. We show that the standard fixed point iterations using the new mapping converges faster than the standard iterations applied to the original concave mapping. As exemplary applications of the proposed methods, we consider the problems of power and load estimation in networks based on the orthogonal frequency division multiple access (OFDMA) technology. 
 \end{abstract}

\section{Introduction}

Problems that can be posed as that of finding fixed points of standard interference mappings are ubiquitous in communication systems \cite{yates95,slawomir09,martin11,Majewski2010,feh2013,siomina12,renato14SPM,ho2015,renato14SIP}, and, in particular, in planning and optimization of networks based on the orthogonal frequency division multiple access (OFDMA) technology \cite{MajewskiTurkeHuangEtAl2007Analytical,Majewski2010,feh2013,siomina12,renato14SPM,ho2015,renato14SIP}. In many of these applications, the mappings are positive concave mappings, which are a strict subclass of standard interference mappings \cite{renato14SPM}.

 For example, by using standard interference coupling models that are widely used in the literature \cite{mogensen07}, the studies in \cite{Majewski2010} and \cite{siomina12} consider a very particular case of a positive concave mapping for the problem of load estimation in long-term evolution (LTE) networks. The fixed point of that mapping, if it exists, indicates the bandwidth required by each base station to satisfy the data rate requirements of users. With this knowledge, we can evaluate the feasibility of a network design by verifying whether the required bandwidth does not exceed the available bandwidth. However, especially in large-scale planning, computation of the fixed point may require time-consuming iterative methods. Therefore, the development of fast tools to ensure the existence of a fixed point before starting a time-consuming iterative process is of high practical relevance to network designers and to algorithms for self-organizing networks.

In the above-mentioned load estimation problem, existence of a fixed point is fully characterized by the spectral radius of a matrix that is easily constructed from the associated concave mapping \cite{siomina12}. We can also use this matrix to build an affine mapping having as its fixed point a vector that gives a lower bound of the network load. The main advantage of working with affine mappings in finite dimensional spaces is that computation of their fixed points reduces to solving simple systems of linear equations, so we may easily obtain in this way a certificate that the current network configuration is not able to serve the demanded traffic. 

The first objective of this study is to show that, by using the concept of recession or asymptotic functions in convex analysis \cite{aus03,baus11}, the technique used in \cite{siomina12} for the construction of the above-mentioned matrices (hereafter called {\it lower bounding matrices}) admits a simple extension to general positive concave mappings. This extension has been motivated by recent results in power estimation in LTE networks \cite{renato14SIP,ho2015}, which deal with mappings different from that considered in \cite{Majewski2010,siomina12}. Concave mappings are also common in many applications in different fields \cite{kennan2001}, so the results of this study are relevant for applications outside of the wireless domain. We show alternative construction methods for lower bounding matrices that are very simple in many applications, including those originally considered in \cite{siomina12}. We also prove that the spectral radius of lower bounding matrices of general concave mappings gives a necessary condition for the existence of fixed points. For some particular concave mappings, this condition is shown to be sufficient. 

The second objective of this study is to develop an acceleration method for the standard fixed-point iteration described in \cite{yates95} when applied to concave mappings. More specifically, we combine the lower bounding matrix and the original positive concave mapping to generate a new mapping that has the same fixed point of the original concave mapping. By applying the standard fixed point iteration to this new mapping, the convergence speed is improved in a well-defined sense, and the computational complexity is not unduly increased because only one additional matrix-vector multiplication per iteration is required. As exemplary applications of the above results, we consider the problems of power and load estimation in OFDMA-based systems \cite{ho2015,renato14SIP}.

This study is structured as follows. In Sect.~\ref{sect.preliminaries} we review basic results in convex analysis and in interference calculus. The material in Sect.~\ref{sect.preliminaries} can now be  considered standard, but we also show a simple proof of the fact that positive concave functions are standard interference functions. In Sect.~\ref{sect.preliminaries} we relate some results in \cite{siomina12} (used to compute lower bounds for load in LTE network planning) to standard results on recession functions in convex analysis. The relations are used in Sect.~\ref{sect.acceleration} to derive conditions for the existence of fixed points of general positive concave mappings. We also propose novel low-complexity iterative methods that improve the convergence speed of the standard fixed point algorithm. In Sect.~\ref{sect.applications} we revisit the problems of load and power estimation in OFDMA-based networks, and we show how the novel results and algorithms proposed here can be used in these concrete applications.

\section{Preliminaries}
\label{sect.preliminaries}

In this study, we use the following standard definitions. By $\innerprod{\signal{x}}{\signal{y}}$ for arbitrary  $\signal{x}\in\real^N$ and $\signal{y}\in\real^N$, we denote the standard inner product $\innerprod{\signal{x}}{\signal{y}}:=\signal{x}^t\signal{y}$. Its induced norm is given by $\|\signal{x}\|:=\sqrt{\innerprod{\signal{x}}{\signal{x}}}$.  The set $\mathcal{I}:=\{1,\ldots,N\}$ is the set of indices of the components of vectors in $\real^N$. The $\infty$-norm of a vector $\signal{x}=[x_1,\ldots,x_N]$ is the norm given by $\|\signal{x}\|_\infty:=\max_{i\in\mathcal{I}} |x_i|$. We define by $\signal{e}_k$ the $k$th standard basis vector of $\real^N$. Vector inequalities should be understood as component-wise inequalities, and we define $\real_{+}^N:=[0,\infty[^N$ and $\real_{++}^N:=~]0,\infty[^N$ (the superscript is omitted if $N=1$). The set of positive integers is denoted by $\Natural:=\{1,~2,\ldots\}$. The spectral radius of a matrix $\signal{M}\in\real^{N\times N}$ is given by $\rho(\signal{M}):=\max\{|\lambda_1|,\ldots,|\lambda_N|\}$, where $\lambda_1,\ldots,\lambda_N$ are the eigenvalues of the matrix $\signal{M}$. The component of the $i$th row and $k$th column of a matrix $\signal{M}$ is denoted by $[\signal{M}]_{i,k}$. For a vector $\signal{x}=[x_1,\ldots,x_N]\in\real^N$, the matrix $\mathrm{diag}(\signal{x})\in\real^{N\times N}$ is a diagonal matrix with $[\mathrm{diag}(\signal{x})]_{i,i}=x_i$.
 
Concave functions and standard interference functions play a crucial role in this study, so we review below basic definitions and known results that are extensively used in the next sections. 

\begin{definition}
(Convex set) A set $C\subset\real^N$ is said to be convex if
\begin{multline*}
(\forall\signal{x}\in C)(\forall\signal{y}\in C)(\forall \alpha\in~]0,1[)\quad \alpha\signal{x}+(1-\alpha)\signal{y}\in C.
\end{multline*}
\end{definition}

\begin{definition}
(Concave functions) We say that $f:C\to\real\cup\{-\infty\}$ is a concave function if $C\subset\real^N$ is a convex set and 
\begin{multline*}
(\forall \signal{x}\in \mathrm{dom}~f)(\forall \signal{y}\in \mathrm{dom}~f)(\forall \alpha\in~]0,1[) \\
f(\alpha\signal{x}+(1-\alpha)\signal{y}) \ge \alpha f(\signal{x})+(1-\alpha)f(\signal{y}),
\end{multline*}
where $\mathrm{dom}~f:=\{\signal{x}\in C~|~f(\signal{x})>-\infty\}$ is the (effective) domain of $f$.~\footnote{In the literature, when a concave function $f$ is allowed to take the value $-\infty$, assuming that $C=\real^N$ is a common practice. If $C$ is a proper subset of $\real^N$, we can define $f(\signal{x})=-\infty$ if $\signal{x}\in\real^N\backslash C$ to extend $f$ from $C$ to $\real^N$. By doing so, the effective domain is preserved. However, for notational convenience later in the text, we do not necessarily adhere to this convention, and we allow $C$ to be a strict subset of $\real^N$.}
\end{definition}

Concave functions $f:C\to\real\cup\{-\infty\}$ with $C\subset\real^N$ for which there exists at least one vector $\signal{x}\in C$ satisfying $f(\signal{x})>-\infty$ are called {\it proper} concave functions. If for every sequence $\{\signal{x}_n\}_{n\in\Natural}\subset C$ converging to an arbitrary vector $\signal{x}\in C$, we have $\limsup_{n\to\infty}f(\signal{x}_n)\le f(\signal{x})$, then we say that the function $f$ is upper semicontinuous (on $C$).

Every concave function $f:C\to\real\cup\{-\infty\}$ can be related to a convex function by $-f:C\to\real\cup\{\infty\}$ ($-f$ takes the value $\infty$ whenever $f$ takes the value $-\infty$), so the following results on concave functions can be directly deduced from standard results on convex functions found in the literature \cite{baus11,aus03}.

\begin{definition}
\label{sup.inequality}
(Superdifferentials and supergradients) Let $f:~C\to \real\cup \{-\infty\}$ be a concave function with $\emptyset\ne C\subset\real^N$. The superdifferential of $f$ at $\signal{x}\in \mathrm{dom}~f$ is the set given by
 \begin{align*}
 \partial f(\signal{x}) := \left\{\signal{u}\in\real^N~|~(\forall \signal{y}\in C)~\innerprod{\signal{y}-\signal{x}}{\signal{u}}+f(\signal{x})\ge f(\signal{y})\right\}.
 \end{align*}
 If $\signal{x}\notin\mathrm{dom}~f$, then we define $\partial f(\signal{x}):=\emptyset$. A vector $\signal{g}\in \partial f(\signal{x})$ is called a supergradient of $f$ at $\signal{x}$. The domain of the superdifferential $\partial f$ is the set given by $\mathrm{dom}~\partial f:=\{\signal{x}\in\real^N~|~\partial f(\signal{x})\neq\emptyset\}$.
\end{definition}

In this study, if the point at which a supergradient is selected needs to be explicitly known, then we often use the notation $\signal{g}(\signal{x})\in\partial f(\signal{x})$ to denote an arbitrary choice of the supergradient at $\signal{x}$.

 As a particular case of \cite[Corollary~16.15]{baus11}, we have the following:

\begin{fact}
\label{fact.existence}
Let $f:\real^N_{+}\to\real_{++}$ be concave. Then the superdifferential $\partial f(\signal{x})$ is nonempty for every $\signal{x} \in\real_{++}^N$.
\end{fact}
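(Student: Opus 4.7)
The plan is to reduce the claim to the standard finite-dimensional fact that a proper convex function on $\real^N$ has nonempty subdifferential at every interior point of its effective domain, and then to translate the statement back to the concave setting. To this end, I would define $g:\real^N\to(-\infty,\infty]$ by $g(\signal{x}):=-f(\signal{x})$ if $\signal{x}\in\real_+^N$ and $g(\signal{x}):=+\infty$ otherwise. Since $f$ is concave with values in $\real_{++}$, the extension $g$ is proper convex with $\mathrm{dom}~g=\real_+^N$, and $g$ is finite on its whole effective domain.

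The core step is to establish $\partial g(\signal{x})\neq\emptyset$ for every $\signal{x}\in\mathrm{int}(\mathrm{dom}~g)=\real_{++}^N$. This is the classical subdifferential-existence result for proper convex functions in finite dimensions, which is exactly \cite[Cor.~16.15]{baus11}. The underlying argument is to apply the supporting hyperplane theorem to the convex epigraph $\mathrm{epi}~g\subset\real^{N+1}$ at its boundary point $(\signal{x},g(\signal{x}))$, and then to observe that since $\signal{x}$ lies in the interior of $\mathrm{dom}~g$ and $g(\signal{x})$ is finite, the resulting supporting hyperplane cannot be vertical. Normalizing its normal vector produces a subgradient $\signal{v}\in\real^N$ satisfying $g(\signal{y})\ge g(\signal{x})+\innerprod{\signal{y}-\signal{x}}{\signal{v}}$ for all $\signal{y}\in\real^N$.

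To finish, I would set $\signal{u}:=-\signal{v}$ and restrict the previous inequality to $\signal{y}\in\real_+^N$, where $g=-f$. Multiplying through by $-1$ yields $f(\signal{y})\le f(\signal{x})+\innerprod{\signal{y}-\signal{x}}{\signal{u}}$ for every $\signal{y}\in\real_+^N$, which is exactly the condition $\signal{u}\in\partial f(\signal{x})$ in the sense of Definition~\ref{sup.inequality}. Hence $\partial f(\signal{x})\neq\emptyset$ for every $\signal{x}\in\real_{++}^N$. The only nontrivial ingredient is the classical subdifferential-existence theorem invoked above; the positivity hypothesis $f(\real_+^N)\subset\real_{++}$ is used exclusively to guarantee that $g$ is real-valued on its domain, so that the supporting hyperplane at an interior point is non-vertical and delivers an honest (sub-)gradient rather than a mere separating direction. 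No obstacle is specific to the positive-concave context.
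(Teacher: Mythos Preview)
Your proposal is correct and follows exactly the route the paper takes: the paper does not give an independent proof but simply records the fact as a particular case of \cite[Corollary~16.15]{baus11}, and you invoke the very same corollary after the natural reduction to the convex function $g=-f$ extended by $+\infty$ outside $\real_+^N$. Your additional sketch of the supporting-hyperplane argument behind that corollary is fine but not needed for the comparison.
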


The proposed acceleration methods are based on the concept of recession functions (or asymptotic functions) in convex analysis.

\begin{definition}
\label{definition.recession}
(Recession or asymptotic functions) Let $f:\real^N\to\real\cup\{-\infty\}$ be upper semicontinuous, proper, and concave. We define as its recession or asymptotic function at $\signal{y}\in\real^N$ the function given by (see \cite[Ch. 2.5]{aus03}\cite[p. 152]{baus11} for the standard definition for convex $f$):
\begin{align*}
(\forall \signal{x}\in\mathrm{dom} f)~~ f_\infty(\signal{y}):=\lim_{h\to\infty}\dfrac{f(\signal{x}+h\signal{y})-f(\signal{x})}{h}.
\end{align*}
(NOTE: The above limit is always well defined. We assume that it can take the value $-\infty$.)
\end{definition}

\begin{fact}
\label{fact.recession_func}
If $f:\real^N\to\real\cup\{-\infty\}$ is a proper, upper semicontinuous, and concave function, then for every $\signal{y}\in \mathrm{dom} f$  we have \cite[Corollary 2.5.3]{aus03}
\begin{align}
\label{eq.recession_func}
f_\infty(\signal{y})=\lim_{h\to 0^+}h f(h^{-1}\signal{y}),
\end{align}
and the above is valid for every $\signal{y}\in\real^N$ if $\signal{0}\in\mathrm{dom}~f$.
\end{fact}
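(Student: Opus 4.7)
The plan is to reduce the alternative formula \eqref{eq.recession_func} directly to Definition \ref{definition.recession} by a convenient choice of the base point $\signal{x}\in\mathrm{dom}~f$, exploiting the implicit fact that the limit in the definition is independent of this choice. This base-point independence is the standard nontrivial ingredient from the convex-analytic theory of asymptotic functions; it rests on the observation that concavity of $f$ along any half-line makes the chord slope $h\mapsto [f(\signal{x}+h\signal{y})-f(\signal{x})]/h$ non-increasing in $h$, and comparing two base points $\signal{x}_1,\signal{x}_2\in\mathrm{dom}~f$ by means of concave interpolation shows that the two slopes agree in the limit $h\to\infty$.

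For the first assertion, I take $\signal{x}=\signal{y}$ in Definition \ref{definition.recession}, which is legitimate since $\signal{y}\in\mathrm{dom}~f$, and substitute $t := 1/(1+h)$. This sends $h\to\infty$ to $t\to 0^+$ and satisfies $1+h = t^{-1}$ and $h = (1-t)/t$. The defining quotient transforms into
\begin{align*}
\frac{f((1+h)\signal{y})-f(\signal{y})}{h} = \frac{tf(t^{-1}\signal{y}) - tf(\signal{y})}{1-t}.
\end{align*}
Since $f(\signal{y})\in\real$ is finite, $tf(\signal{y})\to 0$ and $1-t\to 1$ as $t\to 0^+$, and passing to the limit produces $f_\infty(\signal{y})=\lim_{t\to 0^+} tf(t^{-1}\signal{y})$, which is \eqref{eq.recession_func}.

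For the extension to arbitrary $\signal{y}\in\real^N$ when $\signal{0}\in\mathrm{dom}~f$, I instead take $\signal{x}=\signal{0}$ in Definition \ref{definition.recession} and substitute $t=1/h$, obtaining
\begin{align*}
f_\infty(\signal{y}) = \lim_{t\to 0^+}\bigl[tf(t^{-1}\signal{y}) - tf(\signal{0})\bigr] = \lim_{t\to 0^+}tf(t^{-1}\signal{y});
\end{align*}
no requirement $\signal{y}\in\mathrm{dom}~f$ is needed for this particular base point. Should $\signal{y}\notin\mathrm{dom}~f$, writing $\signal{y}$ as the convex combination $h^{-1}(h\signal{y})+(1-h^{-1})\signal{0}$ for $h\geq 1$ and invoking concavity together with the finiteness of $f(\signal{0})$ forces $f(h\signal{y})=-\infty$, so both sides of \eqref{eq.recession_func} attain $-\infty$. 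The main obstacle in this proof is therefore the appeal to base-point independence of the defining limit; once that is granted, everything else reduces to the algebraic substitution $t=1/(1+h)$ or $t=1/h$.
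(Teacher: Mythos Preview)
The paper does not prove Fact~\ref{fact.recession_func}; it is quoted as a known result from \cite[Corollary~2.5.3]{aus03} without argument, so there is no in-paper proof to compare against.

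Your derivation is correct. Once base-point independence of the limit in Definition~\ref{definition.recession} is granted---and you rightly flag this as the substantive convex-analytic ingredient, together with the monotonicity of the chord slope that guarantees the limit exists---the substitutions $t=1/(1+h)$ with base point $\signal{x}=\signal{y}$ and $t=1/h$ with base point $\signal{x}=\signal{0}$ reduce the defining quotient to \eqref{eq.recession_func} by elementary algebra. The finiteness of $f(\signal{y})$ (respectively $f(\signal{0})$) is exactly what makes the correction terms $tf(\signal{y})$ and $tf(\signal{0})$ vanish in the limit. Your treatment of the case $\signal{y}\notin\mathrm{dom}~f$ via the convex combination $\signal{y}=h^{-1}(h\signal{y})+(1-h^{-1})\signal{0}$ for $h\ge 1$ is also sound and confirms that both sides equal $-\infty$ there, so the formula holds trivially.
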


\begin{fact}
\label{fact.sub_recession}
Let $f:\real^N\to\real\cup\{-\infty\}$ be proper, upper semicontinuous, and concave. Then \cite[Proposition 6.5.1]{aus03}
\begin{align*}
f_\infty(\signal{y})=\inf\{\innerprod{\signal{g}}{\signal{y}}~|~\signal{x}\in\mathrm{dom}~\partial f,~\signal{g}\in~\partial f(\signal{x})\}.
\end{align*}
\end{fact}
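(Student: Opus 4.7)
The plan is to establish the two matching inequalities in the standard way, exploiting only the concavity supergradient inequality from Definition~3 and the monotone-difference-quotient characterization of $f_\infty$ implicit in Definition~4. Let me denote the infimum on the right-hand side by $I(\signal{y})$.

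For the direction $f_\infty(\signal{y})\le I(\signal{y})$, I would fix any $\signal{x}_0\in\mathrm{dom}~\partial f$ and $\signal{g}\in\partial f(\signal{x}_0)$, and apply the supergradient inequality in Definition~3 at the point $\signal{x}_0+h\signal{y}$ for $h>0$. This yields $f(\signal{x}_0+h\signal{y})-f(\signal{x}_0)\le h\innerprod{\signal{g}}{\signal{y}}$. Dividing by $h$ and passing to the limit $h\to\infty$ using the definition of $f_\infty$ gives $f_\infty(\signal{y})\le \innerprod{\signal{g}}{\signal{y}}$. Since $\signal{x}_0$ and $\signal{g}\in\partial f(\signal{x}_0)$ were arbitrary, taking the infimum concludes this direction.

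For the reverse inequality $I(\signal{y})\le f_\infty(\signal{y})$, I would construct a sequence of supergradients along a ray and bound their inner products with $\signal{y}$ from above by the difference quotient whose limit is $f_\infty(\signal{y})$. Concretely, fix any $\signal{x}_0\in\mathrm{dom}~f$ and set $\signal{x}_h:=\signal{x}_0+h\signal{y}$. For each $h>0$ such that $\signal{x}_h\in\mathrm{dom}~\partial f$, pick $\signal{g}_h\in\partial f(\signal{x}_h)$. Applying Definition~3 at the base point $\signal{x}_h$ with test point $\signal{x}_0$, I would obtain
\begin{align*}
f(\signal{x}_0)\le f(\signal{x}_h)+\innerprod{\signal{g}_h}{\signal{x}_0-\signal{x}_h}=f(\signal{x}_0+h\signal{y})-h\innerprod{\signal{g}_h}{\signal{y}},
\end{align*}
which rearranges to $\innerprod{\signal{g}_h}{\signal{y}}\le \bigl(f(\signal{x}_0+h\signal{y})-f(\signal{x}_0)\bigr)/h$. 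Letting $h\to\infty$ along an admissible subsequence, the right-hand side tends to $f_\infty(\signal{y})$ by Definition~4, so $I(\signal{y})\le f_\infty(\signal{y})$.

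The main obstacle is ensuring, for the reverse direction, the existence of sufficiently many $h>0$ with $\signal{x}_h\in\mathrm{dom}~\partial f$, so that the constructed sequence is well defined and large values of $h$ are reachable. For a proper, upper semicontinuous, concave $f$, the set $\mathrm{dom}~\partial f$ contains the relative interior of $\mathrm{dom}~f$ and is in particular dense in $\mathrm{dom}~f$; I would pick $\signal{x}_0$ in this relative interior and argue that the half-line $\{\signal{x}_0+h\signal{y}:h\ge 0\}$ either stays in $\mathrm{dom}~f$ and meets $\mathrm{dom}~\partial f$ on an unbounded set (via the upper semicontinuity and the fact that points where the superdifferential is empty form a thin set), or escapes $\mathrm{dom}~f$, in which case $f_\infty(\signal{y})=-\infty$ and a separate argument (using hyperplanes separating $\signal{y}$ from the recession cone of $\mathrm{dom}~f$ to produce supergradients $\signal{g}$ with $\innerprod{\signal{g}}{\signal{y}}$ arbitrarily negative) shows $I(\signal{y})=-\infty$ as well. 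Handling this boundary case cleanly is the one nontrivial step; everything else reduces to a single application of the defining inequality in Definition~3.
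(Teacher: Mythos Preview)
The paper does not prove this statement at all: it is recorded as a \emph{Fact} and attributed to \cite[Proposition~6.5.1]{aus03}, so there is no in-paper argument to compare your proposal against. What follows is therefore an assessment of your sketch on its own merits and a remark on how it relates to the cited source.

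Your first inequality $f_\infty(\signal{y})\le I(\signal{y})$ is clean and correct. For the reverse inequality, your ray argument is also correct in the main case: if you choose $\signal{x}_0\in\mathrm{ri}\,(\mathrm{dom}\,f)$ and the half-line $\{\signal{x}_0+h\signal{y}:h\ge 0\}$ stays in $\mathrm{dom}\,f$, then by the line-segment principle it stays in $\mathrm{ri}\,(\mathrm{dom}\,f)$, so $\partial f(\signal{x}_0+h\signal{y})\neq\emptyset$ for every $h\ge 0$, and your bound $\innerprod{\signal{g}_h}{\signal{y}}\le (f(\signal{x}_0+h\signal{y})-f(\signal{x}_0))/h$ gives $I(\signal{y})\le f_\infty(\signal{y})$ directly (including the subcase $f_\infty(\signal{y})=-\infty$ with the ray inside the domain).

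The genuine gap is the remaining case, where the ray exits $\mathrm{dom}\,f$ at some finite $h^\star$ while $f$ stays bounded on $[\,\signal{x}_0,\signal{x}_0+h^\star\signal{y}\,)$; then $f_\infty(\signal{y})=-\infty$, but your interior supergradients $\signal{g}_h$ for $h<h^\star$ need not satisfy $\innerprod{\signal{g}_h}{\signal{y}}\to-\infty$ (take $f=0$ on $[0,1]$, $-\infty$ elsewhere, $\signal{y}=1$: all interior supergradients are $0$). Your proposed fix---``separate $\signal{y}$ from the recession cone of $\mathrm{dom}\,f$ by a hyperplane''---does not by itself manufacture \emph{supergradients of $f$} with $\innerprod{\signal{g}}{\signal{y}}$ arbitrarily negative; in the example those live at the boundary point $1$, where $\partial f(1)=(-\infty,0]$, not along your interior ray. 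The standard route in the cited reference handles this uniformly via Fenchel duality: one shows $f_\infty$ is the support-from-below function of $\mathrm{dom}\,(-f)^*$ and that the range of $\partial f$ is dense in (indeed, contains the relative interior of) $\mathrm{dom}\,(-f)^*$, which immediately yields $I(\signal{y})=f_\infty(\signal{y})$ in all cases. If you want to keep a purely primal argument, you must additionally treat boundary points of $\mathrm{dom}\,f$ where the superdifferential is an unbounded set pointing ``away'' from $\signal{y}$.
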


Many estimation and optimization tasks in communication networks can often be posed as systems coupled by standard interference functions, which we define below.

\begin{definition}
\label{definition.inter_func}
 (Standard interference functions and mappings \cite{yates95}) A function $f: \real_+^N \to \real_{++}$ is said to be a standard interference function if the following properties hold:\par 
\begin{enumerate}
 \item ({\it Scalability}) $(\forall \signal{x}\in\real^N_+)$ $(\forall \alpha>1)$  $\alpha {f}(\signal{x})>f(\alpha\signal{x})$. \par
 \item ({\it Monotonicity}) $(\forall \signal{x}_1\in\real_+^N)$ $(\forall \signal{x}_2\in\real_+^N)$ $\signal{x}_1\ge\signal{x}_2 \Rightarrow{f}(\signal{x}_1)\ge f(\signal{x}_2)$. \par 
\end{enumerate}
Given $N$ standard interference functions $f_i:\real^N_+\to\real_{++}$, $i=1,\ldots,N$, we call the mapping $T:\real^N_+\to\real_{++}^N:\signal{x}\mapsto[f_1(\signal{x}),\ldots, f_N(\signal{x})]$ a {\it standard interference mapping}.
\end{definition}

\begin{fact} \label{fact.inter_map} (Properties of interference mappings \cite{yates95}) Let ${T}:\real^N_{+}\to\real^N_{++}$ be a standard interference mapping. Then the following holds:
\begin{subfact}\label{fact.uniqueness} Let $\mathrm{Fix}(T):=\{\signal{x}\in\real^N_{++}~|~T(\signal{x})=\signal{x}\}$ be the set of fixed points of $T$, then $\mathrm{Fix}(T)$ is either an empty set or a singleton.
\end{subfact}
\begin{subfact}
\label{subfact.upper_bound}
 $\mathrm{Fix}(T)\neq\emptyset$ if and only if there exists $\signal{x}^\prime\in\real^N$ such that ${T}(\signal{x}^\prime)\le\signal{x}^\prime$.
\end{subfact}
\begin{subfact} 
\label{subfact.monotone}
If $\mathrm{Fix}(T)\neq\emptyset$, then the fixed point of $T$ is the limit of the sequence $\{\signal{x}_n\}_{n\in\Natural}$ generated by $\signal{x}_{n+1}={T}(\signal{x}_n)$, where $\signal{x}_1\in\real^N_{+}$ is arbitrary.\footnote{In finite dimensional spaces, all norms are equivalent. Therefore, convergence of a sequence $\{\signal{x}_n\}_{n\in\Natural}\subset \real^N$ to a point  $\signal{x}^\star\in\real^N$ does not depend on the choice of the norm; i.e., $\lim_{n\to\infty}\|\signal{x}_n-\signal{x}^\star\|_\mathrm{a}=0$ for any norm $\|\cdot\|_\mathrm{a}$.} If $\signal{x}_1$ satisfies ${T}(\signal{x}_1)\ge \signal{x}_1$ (resp. ${T}(\signal{x}_1)\le \signal{x}_1$), then the sequence is monotonically increasing (resp. monotonically decreasing) in each component. In particular, monotonically increasing sequences are produced with $\signal{x}_1=\signal{0}$.
\end{subfact}
\end{fact}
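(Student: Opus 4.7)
The three sub-facts are the classical results of Yates (1995), and the plan is to prove them by combining the two defining properties (scalability and monotonicity) via a ``ratio'' argument for uniqueness and a sandwiching argument for existence and convergence.

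For Fact~\ref{fact.uniqueness}, I would use the standard maximum-ratio trick. Suppose $\signal{x},\signal{y}\in\mathrm{Fix}(T)$, and set $\alpha:=\max_{i\in\mathcal{I}} y_i/x_i$; by symmetry, assume $\alpha\ge 1$. If $\alpha>1$, monotonicity applied to $\signal{y}\le\alpha\signal{x}$ gives $\signal{y}=T(\signal{y})\le T(\alpha\signal{x})$, while strict scalability gives $T(\alpha\signal{x})<\alpha T(\signal{x})=\alpha\signal{x}$ componentwise. The resulting strict inequality $\signal{y}<\alpha\signal{x}$ contradicts the fact that the maximum ratio $\alpha$ is attained at some index. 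Hence $\alpha=1$, i.e. $\signal{y}\le\signal{x}$, and swapping the roles of $\signal{x}$ and $\signal{y}$ yields $\signal{x}=\signal{y}$.

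For Fact~\ref{subfact.upper_bound}, the ``only if'' direction is immediate with $\signal{x}'$ equal to the fixed point. For ``if'', start the iteration at $\signal{x}_1=\signal{x}'$: since $T(\signal{x}')\le\signal{x}'$, monotonicity yields inductively $\signal{x}_{n+1}\le\signal{x}_n$, so $\{\signal{x}_n\}$ is componentwise decreasing and bounded below by $\signal{0}$, hence convergent to some $\signal{x}^\star\in\real^N_+$. Continuity of $T$ on $\real^N_{++}$ (which can be extracted from scalability and monotonicity by bracketing $T$ between $(1-\epsilon)T(\signal{x})$ and $(1+\epsilon)T(\signal{x})$) then gives $T(\signal{x}^\star)=\signal{x}^\star$, and $\signal{x}^\star\in\real^N_{++}$ follows because $T(\signal{x}^\star)\in\real^N_{++}$. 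Fact~\ref{subfact.monotone} is then proved by sandwiching: given the unique fixed point $\signal{x}^\star$, pick constants $0<c\le 1\le C<\infty$ such that $c\signal{x}^\star\le T(\signal{x}_1)\le C\signal{x}^\star$, which is possible because $T(\signal{x}_1)\in\real^N_{++}$. The iterations started at $c\signal{x}^\star$ and $C\signal{x}^\star$ are monotone (increasing and decreasing, respectively, by scalability combined with monotonicity) and bracket the iterations started at $\signal{x}_1$. Both bracketing sequences converge to fixed points by the argument of the previous paragraph, and uniqueness forces both limits to equal $\signal{x}^\star$. The monotonicity claims for starting vectors with $T(\signal{x}_1)\ge\signal{x}_1$ (respectively $\le$) follow directly from monotonicity of $T$ applied inductively; the case $\signal{x}_1=\signal{0}$ is covered since $T(\signal{0})\ge\signal{0}$.

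The most delicate step is the continuity assertion used in passing to the limit in Fact~\ref{subfact.upper_bound}: standard interference functions are not defined to be continuous, so continuity has to be derived. The argument relies on the fact that for any $\epsilon>0$ and any $\signal{x}\in\real^N_{++}$, one has $(1-\epsilon)\signal{x}\le\signal{y}\le(1+\epsilon)\signal{x}$ for $\signal{y}$ sufficiently close to $\signal{x}$, whence monotonicity yields $T((1-\epsilon)\signal{x})\le T(\signal{y})\le T((1+\epsilon)\signal{x})$ and scalability forces the outer bounds to approach $T(\signal{x})$ as $\epsilon\to 0^+$. A secondary nuisance is ensuring that the limits obtained in the existence and convergence arguments lie in $\real^N_{++}$ (so that $T$ is defined and strictly positive there); this is where the codomain $\real^N_{++}$ of $T$ plays an essential role.
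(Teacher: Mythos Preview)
Your proposal is essentially correct and reproduces the classical arguments of Yates. Note, however, that the paper does not prove this statement at all: it is recorded as a \emph{Fact} with a citation to \cite{yates95}, so there is no ``paper's own proof'' to compare against. Your sketch is therefore supplying what the paper simply imports from the literature.

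One small point worth tightening: in your existence argument for Fact~\ref{subfact.upper_bound}, you invoke continuity of $T$ at the limit $\signal{x}^\star$ and only afterwards deduce $\signal{x}^\star\in\real^N_{++}$ from $T(\signal{x}^\star)\in\real^N_{++}$. This is circular as written, because your continuity argument is only established on $\real^N_{++}$. The clean fix is to observe, \emph{before} passing to the limit, that $\signal{x}_{n+1}=T(\signal{x}_n)\ge T(\signal{0})$ for all $n$ by monotonicity (since $\signal{x}_n\ge\signal{0}$), so the decreasing sequence is bounded below by the strictly positive vector $T(\signal{0})$ and hence $\signal{x}^\star\ge T(\signal{0})>\signal{0}$. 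With $\signal{x}^\star$ in the interior, your scalability-based continuity argument applies and the rest goes through. You allude to this issue as a ``secondary nuisance'' and correctly identify that the codomain $\real^N_{++}$ is what saves the day, but the actual mechanism (the uniform lower bound $T(\signal{0})$) should be made explicit.
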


The focus of this study is on (positive) concave functions, which as shown below are a subclass of standard interference functions.

\begin{proposition}
\label{prop.pos_conc}
 Concave functions $f:\real^N_{+}\to\real_{++}$ are standard interference functions:
\end{proposition}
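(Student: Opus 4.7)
The plan is to verify the two axioms of Definition \ref{definition.inter_func} separately, exploiting only concavity together with the positivity of $f$ on all of $\real^N_+$ (note that $f(\signal{0})>0$ is available since the codomain is $\real_{++}$).

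For scalability, I would use the convex combination $\signal{x}=\tfrac{1}{\alpha}(\alpha\signal{x})+\bigl(1-\tfrac{1}{\alpha}\bigr)\signal{0}$, which is valid for any $\signal{x}\in\real^N_+$ and $\alpha>1$. Concavity gives
\begin{equation*}
f(\signal{x})\ge \tfrac{1}{\alpha} f(\alpha\signal{x})+\bigl(1-\tfrac{1}{\alpha}\bigr)f(\signal{0}),
\end{equation*}
and multiplying through by $\alpha$ yields $\alpha f(\signal{x})\ge f(\alpha\signal{x})+(\alpha-1)f(\signal{0})$. Since $f(\signal{0})>0$ and $\alpha>1$, the additive term is strictly positive, so $\alpha f(\signal{x})>f(\alpha\signal{x})$, which is exactly scalability.

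For monotonicity, suppose $\signal{x}_1\ge\signal{x}_2$ in $\real^N_+$ and consider the one-dimensional slice $\phi:[0,\infty)\to\real_{++}$ defined by $\phi(t):=f\bigl(\signal{x}_2+t(\signal{x}_1-\signal{x}_2)\bigr)$. Since $\signal{x}_1-\signal{x}_2\ge\signal{0}$, the argument stays in $\real^N_+$ for all $t\ge 0$, so $\phi$ is well defined, strictly positive, and concave on $[0,\infty)$. I would argue by contradiction: if $f(\signal{x}_1)<f(\signal{x}_2)$, then $\phi(1)<\phi(0)$, and the standard monotonicity of secant slopes of a concave function forces
\begin{equation*}
\phi(t)\le \phi(1)+(t-1)\bigl(\phi(1)-\phi(0)\bigr)\quad\text{for all }t\ge 1,
\end{equation*}
whose right-hand side tends to $-\infty$ as $t\to\infty$, contradicting $\phi(t)>0$. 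Therefore $f(\signal{x}_1)\ge f(\signal{x}_2)$, establishing monotonicity.

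The only subtle point, and the reason the result is not entirely automatic from concavity, is that concavity alone does not give monotonicity; the argument essentially relies on the positivity of $f$ on the unbounded ray $\{\signal{x}_2+t(\signal{x}_1-\signal{x}_2):t\ge 0\}$ to rule out an eventual decrease. Both steps are short, so I do not expect any technical obstacle beyond making sure the domain considerations (positivity at $\signal{0}$ and along the ray) are explicitly invoked.
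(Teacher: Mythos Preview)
Your proof is correct and follows essentially the same approach as the paper: the scalability argument is identical (the convex combination of $\alpha\signal{x}$ and $\signal{0}$ with weight $1/\alpha$, then using $f(\signal{0})>0$), and for monotonicity both you and the paper exploit concavity along the ray $\{\signal{x}_2+t(\signal{x}_1-\signal{x}_2):t\ge 0\}$ together with positivity of $f$ on that ray. The only cosmetic difference is that the paper derives $f(\signal{x}_1)\ge(1-1/\mu)f(\signal{x}_2)$ directly and lets $\mu\to\infty$, whereas you phrase the same inequality as a secant-slope bound and argue by contradiction.
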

\begin{proof}
We need to prove that concave functions $f:\real^N_{+}\to\real_{++}$ satisfy the scalability and monotonicity properties in Definition~\ref{definition.inter_func}.\par 
({Scalability}) Let $\mu>1$ and $\signal{x}\in\real_+^N$ be arbitrary. By concavity of $f$, for every $\alpha\in~]0,1[$, we have $f(\alpha\mu\signal{x})=f(\alpha\mu\signal{x}+(1-\alpha)\signal{0})\ge \alpha f(\mu\signal{x})+(1-\alpha)f(\signal{0})$. In particular, for $\alpha=1/\mu$, we conclude from the last inequality and positivity of $f$ that
\begin{align*}
f(\signal{x})\ge \dfrac{1}{\mu} f(\mu\signal{x}) + \left(1-\dfrac{1}{\mu}\right)  f(\signal{0}) > \dfrac{1}{\mu} f(\mu\signal{x}),
\end{align*}
which proves the scalability property. \par 

(Monotonicity) Let $(\signal{x}_1,\signal{x}_2)\in\real_{+}^N\times \real_{+}^N$ satisfy $\signal{x}_2\ge \signal{x}_1$. As a result, $\signal{x}_1+\mu(\signal{x}_2-\signal{x}_1)\in\real_+^{N}$ for every $\mu\ge 0$. From the definition of concavity, we also have
\begin{multline*}
(\forall \alpha\in~]0,1[)(\forall\mu\ge 0) \\ f\left((1-\alpha)\signal{x}_1 + \alpha \left(\signal{x}_1+\mu(\signal{x}_2-\signal{x}_1) \right)\right) \ge \\ 
 (1-\alpha) f(\signal{x}_1) + \alpha f\left(\signal{x}_1+\mu(\signal{x}_2-\signal{x}_1)\right).
\end{multline*}
In particular, for an arbitrary $\mu>1$  and for $\alpha = 1/\mu$, we obtain from the positivity of $f$ that
\begin{align*}
f(\signal{x}_2) \ge \left(1-\dfrac{1}{\mu}\right) f(\signal{x}_1)+\dfrac{1}{\mu}f\left(\signal{x}_1+\mu(\signal{x}_2-\signal{x}_1)\right) \\ > f(\signal{x}_1)-\dfrac{1}{\mu} f(\signal{x}_1).
\end{align*}

The inequality $f(\signal{x}_2)>f(\signal{x}_1)-({1}/{\mu}) f(\signal{x}_1)$ is valid for every $\mu>1$, so we can take the limit as $\mu$ goes to infinity to conclude that
\begin{align*}
f(\signal{x}_2)\ge \lim_{\mu\to\infty} \left( f(\signal{x}_1) - \dfrac{1}{\mu} f(\signal{x}_1)\right) = {f}(\signal{x}_1).
\end{align*}

\end{proof}

As every result stated in this section, Proposition~\ref{prop.pos_conc} can be considered standard (see \cite{renato14SPM} and the references therein). Nevertheless, we have decided to include a simple proof of this proposition because similar statements can often be found in the literature without proof. Furthermore, some partial proofs available in the literature make implicit assumptions such as the existence of the supergradients on the boundary of the domain $\real_{+}^N$ and/or the strict concavity of the functions. We emphasize that these assumptions are not required. {As an example of a positive concave function (and hence a standard interference function) not satisfying these two assumptions, we have
\begin{align*}
f:\real_+\to\real_{++}:x\mapsto \begin{cases}
1,&\mathrm{if}~~x=0, \\
2,&\mathrm{otherwise}.
\end{cases}
\end{align*}

To characterize the existence of fixed points of affine standard interference mappings, we can use the following fact: 

\begin{fact}
\label{fact.neuman} \cite[Theorem A.16]{slawomir09}
For an arbitrary matrix $\signal{M}\in\real^{N\times N}$, if $\rho(\signal{M})<1$, then $\sum_{k=1}^\infty \signal{M}^k$ converges and $(\signal{I}-\signal{M})^{-1}=\signal{I}+\sum_{k=1}^\infty \signal{M}^k$.
\end{fact}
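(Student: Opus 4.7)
The plan is to leverage two ingredients: (a) geometric decay of $\|\signal{M}^k\|$ in some submultiplicative matrix norm, which gives absolute convergence of the Neumann series; and (b) the telescoping identity $(\signal{I}-\signal{M})\sum_{k=0}^{n}\signal{M}^k = \signal{I}-\signal{M}^{n+1}$, which, after passing to the limit, identifies the sum with $(\signal{I}-\signal{M})^{-1}$.

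First I would establish (a) via Gelfand's formula: in any submultiplicative matrix norm $\|\cdot\|$ on $\real^{N\times N}$, $\lim_{k\to\infty}\|\signal{M}^k\|^{1/k}=\rho(\signal{M})$. Fixing $r$ with $\rho(\signal{M})<r<1$, there exists $K\in\Natural$ with $\|\signal{M}^k\|\le r^k$ for every $k\ge K$. Consequently $\sum_{k=0}^{\infty}\|\signal{M}^k\|<\infty$, and since $\real^{N\times N}$ is complete, the partial sums $\signal{S}_n:=\sum_{k=0}^{n}\signal{M}^k$ are Cauchy and converge to some $\signal{S}\in\real^{N\times N}$; equivalently, $\sum_{k=1}^{\infty}\signal{M}^k$ converges.

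Next I would invoke the elementary identity $(\signal{I}-\signal{M})\,\signal{S}_n=\signal{S}_n(\signal{I}-\signal{M})=\signal{I}-\signal{M}^{n+1}$, which follows from a direct expansion. Because $\|\signal{M}^{n+1}\|\to 0$, passing to the limit on each side yields $(\signal{I}-\signal{M})\,\signal{S}=\signal{S}\,(\signal{I}-\signal{M})=\signal{I}$. This simultaneously proves that $\signal{I}-\signal{M}$ is invertible and that $(\signal{I}-\signal{M})^{-1}=\signal{S}=\signal{I}+\sum_{k=1}^{\infty}\signal{M}^k$, which is the claim.

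The only non-routine obstacle is justifying the geometric bound $\|\signal{M}^k\|\le r^k$ from $\rho(\signal{M})<1$, i.e.\ Gelfand's formula. If one prefers to avoid it, a self-contained alternative is to use the Jordan canonical form $\signal{M}=\signal{P}\signal{J}\signal{P}^{-1}$: a direct computation shows that each entry of $\signal{J}^k$ is a sum of terms of the form $\binom{k}{j}\lambda^{k-j}$ with $|\lambda|\le\rho(\signal{M})<1$, so $\|\signal{M}^k\|\to 0$ geometrically, which is all that the Cauchy argument and the limit passage above require.
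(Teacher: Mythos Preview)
Your argument is correct and is the standard proof of the Neumann series identity. Note, however, that the paper does not supply its own proof of this statement: it is recorded as a \emph{Fact} with a citation to \cite[Theorem~A.16]{slawomir09}, so there is nothing in the paper to compare your approach against. Your write-up is self-contained and would serve perfectly well as a proof; the only remark is that the Jordan-form alternative you sketch at the end, while sound, is unnecessary here since Gelfand's formula already does the job and is itself elementary (it follows, for instance, from the formula for the radius of convergence of the resolvent series or directly from the Jordan form).
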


\begin{fact}
\label{fact.spec_radius}
\cite[Theorem A.51]{slawomir09} Let $\signal{M}\in\real_{+}^{N\times N}$ be a non-negative matrix, and let $\signal{p}\in\real^N_{++}$ be arbitrary. A sufficient and necessary condition for the system $\signal{x}=\signal{p}+\signal{Mx}$ to have a (strictly) positive solution $\signal{x}\in\real_{++}^N$ is $\rho(\signal{M})< 1$.
\end{fact}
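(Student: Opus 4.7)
For \emph{sufficiency}, suppose $\rho(\signal{M})<1$. The idea is to invoke Fact~\ref{fact.neuman}: the matrix $(\signal{I}-\signal{M})^{-1}$ exists and equals $\signal{I}+\sum_{k=1}^\infty \signal{M}^k$. Because $\signal{M}\in\real_+^{N\times N}$, every power $\signal{M}^k$ is entrywise non-negative, so this inverse dominates $\signal{I}$ componentwise. Setting $\signal{x}:=(\signal{I}-\signal{M})^{-1}\signal{p}$ then yields $\signal{x}=\signal{p}+\signal{M}\signal{x}$ with $\signal{x}\ge \signal{p}>\signal{0}$, hence $\signal{x}\in\real_{++}^N$. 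This direction is essentially immediate once Fact~\ref{fact.neuman} is in hand.

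For \emph{necessity}, suppose a positive solution $\signal{x}\in\real_{++}^N$ exists. My plan is to appeal to the Perron--Frobenius theorem for non-negative matrices, which furnishes a left eigenvector $\signal{w}\in\real_+^N\setminus\{\signal{0}\}$ satisfying $\signal{w}^T\signal{M}=\rho(\signal{M})\,\signal{w}^T$. Left-multiplying the identity $\signal{x}=\signal{p}+\signal{M}\signal{x}$ by $\signal{w}^T$ yields
\begin{align*}
(1-\rho(\signal{M}))\,\signal{w}^T\signal{x} \;=\; \signal{w}^T\signal{p}.
\end{align*}
Since $\signal{w}$ is non-negative and non-zero while $\signal{x},\signal{p}\in\real_{++}^N$, both $\signal{w}^T\signal{x}$ and $\signal{w}^T\signal{p}$ are strictly positive, which forces $\rho(\signal{M})<1$.

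The main obstacle lies in the necessity direction, specifically in the choice of left (rather than right) Perron eigenvector. For a general reducible non-negative matrix, the right eigenvector associated with $\rho(\signal{M})$ may have zero components and therefore need not pair favourably with $\signal{p}$; the left eigenvector, by contrast, pairs with any strictly positive vector to give a strictly positive scalar, which is precisely what closes the argument. An alternative route is to observe that the affine map $T(\signal{x})=\signal{p}+\signal{M}\signal{x}$ is a standard interference mapping (scalability follows from $\alpha T(\signal{x})-T(\alpha \signal{x})=(\alpha-1)\signal{p}>\signal{0}$ for $\alpha>1$, and monotonicity from $\signal{M}\ge\signal{0}$) and then to invoke Fact~\ref{fact.inter_map} together with the convergence of the Neumann-like iteration $\signal{x}_{n+1}=T(\signal{x}_n)$; however, justifying that convergence still requires a spectral-radius argument, so going directly via Perron--Frobenius seems the most economical path.
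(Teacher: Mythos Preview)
Your argument is correct. The sufficiency direction is exactly the standard Neumann-series computation, and your necessity argument via a left Perron eigenvector is the right way to handle the possibly reducible case: the key observation that $\signal{w}\ge\signal{0}$, $\signal{w}\neq\signal{0}$ together with $\signal{x},\signal{p}\in\real_{++}^N$ forces both inner products to be strictly positive is precisely what is needed.

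As for comparison with the paper: there is nothing to compare. The paper does not prove Fact~\ref{fact.spec_radius}; it is quoted as a known result from \cite[Theorem~A.51]{slawomir09} and used as a black box (for instance in the proofs of Proposition~\ref{proposition.spec_radius} and Proposition~\ref{prop.sufficiency}). Your write-up therefore supplies a self-contained justification that the paper simply imports. The alternative route you sketch at the end---treating $T(\signal{x})=\signal{p}+\signal{M}\signal{x}$ as a standard interference mapping and invoking Fact~\ref{fact.inter_map}---would indeed be circular here, since the paper's later use of Fact~\ref{fact.spec_radius} is exactly to characterise when such affine interference mappings have fixed points; your Perron--Frobenius argument avoids this circularity cleanly.
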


We end this section with a very simple statement that is used later to clarify an argument in Sect.~\ref{sect.applications}.

\begin{remark}
\label{remark.eigenvalues} 
Let $\signal{M}\in\real^{N\times N}$ be arbitrary and $\signal{D}\in\real^{N\times N}$ be an invertible matrix. Then the eigenvalues of the matrices $\signal{M}$ and $\signal{D}\signal{M}\signal{D}^{-1}$ are the same (which in particular implies that $\rho(\signal{M})=\rho(\signal{D}\signal{M}\signal{D}^{-1})$).
\end{remark}
\begin{proof}
 Assume that $\signal{x}$ is a right eigenvector associated with an eigenvalue $\lambda$, and define $\signal{y}:=\signal{D}\signal{x}$. As a consequence,
\begin{align*}
\signal{M}\signal{x}=\lambda \signal{x} \Leftrightarrow \signal{M}\signal{D}^{-1}\signal{y}=\lambda \signal{D}^{-1} \signal{y} \Leftrightarrow \signal{D}\signal{M}\signal{D}^{-1}\signal{y}=\lambda \signal{y},
\end{align*}
and the result follows.

\end{proof}

\section{Component-wise infimum of supergradients of positive concave functions}
\label{sect.infimum}

The main objective of this section is to propose two simple techniques for computing the component-wise infimum of supergradients of concave functions (c.f. Proposition~\ref{proposition.inf_1} and Proposition~\ref{proposition.inf_2}). These techniques are motivated by the following application. In load estimation problems in wireless networks, the values taken by partial derivatives of functions related to the load coupling among base stations attain their infimum asymptotically as we move to infinity in the direction of a basis vector \cite{Majewski2010,siomina2012load}. This observation has given rise to efficient techniques for the computation of lower bounds for the load in that very particular application domain \cite{siomina2012load}, and extending these results to a more general class of concave functions is highly desirable for other applications such as power estimation in networks. 

By using the concept of recession or asymptotic functions, we show below that the above-mentioned asymptotic result can be generalized to all positive concave functions, even if the functions are not differentiable, in which case we use supergradients instead of gradients (c.f. Proposition~\ref{proposition.inf_2}). We can further show that the component-wise infimum taken by the supergradients can be easily obtained by means of simple schemes that do not require the computation of supergradients (c.f. Proposition~\ref{proposition.inf_1}). These infimum values are used later by the proposed acceleration schemes to compute fixed points of positive concave mappings, and they can also be used to obtain a certificate that the mapping does not have a fixed point. We start by formalizing some simple properties of supergradients of concave functions.

\begin{lemma} Let $f:\real^N_{+}\to\real_{++}$ be an upper semicontinuous concave function. Then the following holds:

\begin{sublemma} \label{slemma.positivity} All supergradients of $f$ are non-negative vectors; i.e.,
\begin{equation*}
\left(\forall \signal{x}\in\mathrm{dom}~\partial f\right)\left(\forall \signal{g} \in \partial f(\signal{x})\right)\quad \signal{g}\ge \signal{0}.
\end{equation*}
\end{sublemma}

\begin{sublemma}
 \label{slemma.nondecreasing}
 Let $\signal{x}\in\real^N_+$ and $k\in\mathcal{I}$ be arbitrary and assume that $\signal{x}+h\signal{e}_k\in\mathrm{dom}~ \partial f$ for every $h\ge 0$.  Then \\
   $(\forall h>0)$ $\left(\forall \signal{g}^\prime\in\partial f(\signal{x})\right)$ $\left(\forall \signal{g}^\dprime\in\partial f(\signal{x}+h \signal{e}_k)\right)$
\begin{align*}
0\le{g}^\dprime_k \le {g}^\prime_k,
\end{align*}
where $[g_{1}^\prime,\ldots,g_N^\prime]^t:=\signal{g}^\prime$ and $[g_1^\dprime,\ldots,g_N^\dprime]^t:=\signal{g}^\dprime$.
\end{sublemma}

\begin{sublemma} \label{slemma.second_ineq} As in Lemma~\ref{slemma.nondecreasing}, let $\signal{x}\in\real^N_+$ and $k\in\mathcal{I}$ be arbitrary and assume that $\signal{x}+h\signal{e}_k\in\mathrm{dom}~ \partial f$ for every $h\ge 0$.  Then
\begin{multline}
\label{eq.second_ineq}
(\forall h>0)(\forall \signal{g}(\signal{x}+h\signal{e}_k) \in \partial f(\signal{x}+h\signal{e}_k))\\
g_k(\signal{x}+h\signal{e}_k)\le\dfrac{f(\signal{x}+h\signal{e}_k)-f(\signal{x})}{h},
\end{multline}
where $[g_1(\signal{x}+h\signal{e}_k),\dots,g_N(\signal{x}+h\signal{e}_k)]^t:=\signal{g}(\signal{x}+h\signal{e}_k)$.
\end{sublemma}

\end{lemma}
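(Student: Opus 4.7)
All three sublemmas are direct consequences of the defining supergradient inequality in Definition~\ref{sup.inequality} together with the monotonicity of positive concave functions already established in Proposition~\ref{prop.pos_conc}. The plan is to address Lemma~\ref{slemma.positivity} first, then Lemma~\ref{slemma.second_ineq}, and finally Lemma~\ref{slemma.nondecreasing}, because the last one is obtained most cleanly by chaining the inequality proved in Lemma~\ref{slemma.second_ineq} with a companion inequality got by reversing the roles of the base point and the test point in Definition~\ref{sup.inequality}.

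For Lemma~\ref{slemma.positivity}, I would fix $\signal{x}\in\mathrm{dom}~\partial f$, $\signal{g}\in\partial f(\signal{x})$, and $k\in\mathcal{I}$. Since $\signal{x}\in\real^N_+$, the point $\signal{x}+h\signal{e}_k$ also lies in $\real^N_+$ for every $h>0$, so the supergradient inequality applies and gives $f(\signal{x}+h\signal{e}_k)\le f(\signal{x})+h g_k$. By Proposition~\ref{prop.pos_conc} the function $f$ is monotone, so $f(\signal{x}+h\signal{e}_k)\ge f(\signal{x})$; combining these two bounds yields $h g_k\ge 0$, and since $h>0$ and $k$ is arbitrary, $\signal{g}\ge\signal{0}$. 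For Lemma~\ref{slemma.second_ineq}, I would apply the supergradient inequality at the point $\signal{x}+h\signal{e}_k$ with test point $\signal{y}=\signal{x}\in\real^N_+$, which immediately gives $f(\signal{x})\le f(\signal{x}+h\signal{e}_k)-h g_k(\signal{x}+h\signal{e}_k)$, and rearranging is exactly \eqref{eq.second_ineq}.

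For Lemma~\ref{slemma.nondecreasing}, the lower bound $g_k''\ge 0$ is simply Lemma~\ref{slemma.positivity} applied at $\signal{x}+h\signal{e}_k$. For the upper bound, I would apply the supergradient inequality at $\signal{x}$ with test point $\signal{y}=\signal{x}+h\signal{e}_k$ to obtain the ``reverse'' secant-slope estimate $g_k'\ge (f(\signal{x}+h\signal{e}_k)-f(\signal{x}))/h$, and chain it with Lemma~\ref{slemma.second_ineq} to conclude $g_k''\le g_k'$. Since each secant-slope bound is valid for any choice of supergradient at the corresponding base point, the quantification over arbitrary $\signal{g}'\in\partial f(\signal{x})$ and $\signal{g}''\in\partial f(\signal{x}+h\signal{e}_k)$ is handled automatically.

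No deep obstacle is anticipated: the arguments are essentially one-line applications of the supergradient inequality. The only mild bookkeeping concern is staying consistent with the supergradient (rather than subgradient) sign convention of Definition~\ref{sup.inequality}, and checking that every point substituted into that inequality actually lies in the underlying domain $C=\real^N_+$, which is automatic here because $\signal{x}\ge\signal{0}$ and $h\ge 0$.
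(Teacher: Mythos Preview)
Your proposal is correct and uses essentially the same ingredients as the paper: each sublemma boils down to one or two applications of the supergradient inequality at the right test points. The only cosmetic differences are that for Lemma~\ref{slemma.positivity} the paper argues by contradiction from the positivity of $f$ (a negative component $g_i$ would eventually force $f\le 0$ along the ray), whereas you use the monotonicity already proved in Proposition~\ref{prop.pos_conc}; and for Lemma~\ref{slemma.nondecreasing} the paper sums the two supergradient inequalities to get the monotone-operator relation $(\signal{g}^\dprime-\signal{g}^\prime)^t(h\signal{e}_k)\le 0$, whereas you route both inequalities through the common secant slope---the same two inequalities, just combined in a different order.
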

\begin{proof}

\begin{enumerate}
\item We prove the result by contradiction. Assume that there exists a supergradient $\signal{g}=:[g_1,\ldots,g_N]^t\in\partial f(\signal{x}^\prime)$ at some point $\signal{x}^\prime\in\mathrm{dom}~\partial f$ such that $g_i<0$ for an arbitrary $i\in\{1,\ldots,N\}$. We know from the definition of supergradients that 
\begin{align*}
(\forall \signal{y}\in\real^N_{+})~ f(\signal{y}) \le f(\signal{x}^\prime) + \signal{g}^t(\signal{y}-\signal{x}^\prime).
\end{align*}
In particular, for $\signal{u}:\real\to\real^N:h\mapsto \signal{x}^\prime+h\signal{e}_i$, we obtain
\begin{align*}
f(\signal{u}(h)) & \le f(\signal{x}^\prime) + \signal{g}^t(\signal{u}(h)-\signal{x}^\prime) = f(\signal{x}^\prime)+g_i h.
\end{align*}
Now, since $g_i<0$ by assumption,  we obtain $f(\signal{u}(h))\le 0$ for an arbitrary $h \ge f(\signal{x}^\prime)/|g_i|$, which contradicts the positivity of the range of the function $f:\real^N_{+}\to\real_{++}$. This proves Lemma~\ref{slemma.positivity}.

\item By Definition~\ref{sup.inequality}, for arbitrary $\signal{x}_1,\signal{x}_2\in\mathrm{dom}~\partial f$, we have $f(\signal{x}_1)\le f(\signal{x}_0)+\signal{g}_0^t(\signal{x}_1-\signal{x}_0)$ and 
$f(\signal{x}_0)\le f(\signal{x}_1)+\signal{g}_1^t(\signal{x}_0-\signal{x}_1)$, where $\signal{g}_0\in\partial f(\signal{x}_0)$ and $\signal{g}_1\in\partial f(\signal{x}_1)$ are arbitrary supergradients. Summing these two inequalities yields
\begin{align}
\label{eq.monotone}
\left(\signal{g}_1-\signal{g}_0\right)^t(\signal{x}_1-\signal{x}_0)\le 0.
\end{align}

In particular, for $\signal{x}_1=\signal{x}+h\signal{e}_k$ and $\signal{x}_0=\signal{x}$, we have $\signal{0}\le\signal{x}_1-\signal{x}_0=h\signal{e}_k\neq\signal{0}$, and we can set $\signal{g}_0=\signal{g}^\prime\in\partial f(\signal{x})$ and $\signal{g}_1=\signal{g}^\dprime\in\partial f(\signal{x}+h\signal{e}_k)$. Using these particular choices for $\signal{x}_0$, $\signal{x}_1$, $\signal{g}_0$, and $\signal{g}_1$ in \refeq{eq.monotone}, we obtain $g_k^\dprime \le g_k^\prime$. Non-negativity of ${g}_k^\dprime$ has been proved in the first part of the lemma.

\item Use $\signal{x}^\prime=\signal{x}$ in the supergradient inequality ${f}(\signal{x}^\prime)\le f(\signal{x}+h\signal{e}_k)+\signal{g}(\signal{x}+h\signal{e}_k)^t(\signal{x}^\prime-\signal{x}-h\signal{e}_k)$.

\end{enumerate}

\end{proof}

We can now show an efficient scheme to compute the element-wise infimum of supergradients. 

\begin{proposition}
\label{proposition.inf_1}
Let $S:=\bigcup_{\signal{x}\in\real^N_{+}} \partial f(\signal{x})$ be the set of all supergradients of an upper semicontinuous concave function $f:\real^N_{+}\to\real_{++}$. For each $k\in\mathcal{I}$, define 
\begin{align}
\label{eq.infimum}
g_k^\star:=\inf\left\{g_k\in\real~|~ \signal{g}=[g_1,\dots,g_N] \in S\right\}\in\real_{+},
\end{align}
 then we have 
\begin{align*}
(\forall k\in\mathcal{I})~~ g_k^\star=\lim_{h\to 0^+} {h f(h^{-1}\signal{e}_k)}.
\end{align*}
\end{proposition}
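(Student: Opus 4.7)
The plan is to recognize the limit $\lim_{h\to 0^+} h f(h^{-1}\signal{e}_k)$ as the value of the recession function $f_\infty$ evaluated at the basis vector $\signal{e}_k$, and then to invoke Fact~\ref{fact.sub_recession} to identify this recession value with the infimum of the $k$th coordinate of supergradients. Both of these identifications are already available in the preliminaries, so the main work is reduced to checking that Fact~\ref{fact.recession_func} and Fact~\ref{fact.sub_recession} can be applied.

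First I would lift $f$ to $\real^N$, since Facts~\ref{fact.recession_func} and~\ref{fact.sub_recession} are stated for concave functions defined on all of $\real^N$. Following the convention in the footnote of the concavity definition, I extend $f$ to $\hat{f}:\real^N\to\real\cup\{-\infty\}$ by $\hat{f}(\signal{x}) := f(\signal{x})$ for $\signal{x}\in\real_+^N$ and $\hat{f}(\signal{x}) := -\infty$ otherwise. Since $\real_+^N$ is closed and $f$ is upper semicontinuous and concave on it, a routine check shows that $\hat{f}$ is proper, upper semicontinuous, and concave on $\real^N$, with $\signal{0}\in\mathrm{dom}~\hat{f}$. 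Moreover, for $\signal{x}\in\real_+^N$ the supergradient inequality of $\hat{f}$ is trivially satisfied at any $\signal{y}\notin\real_+^N$ (because $\hat{f}(\signal{y})=-\infty$), so $\partial\hat{f}(\signal{x})=\partial f(\signal{x})$; in particular $\mathrm{dom}~\partial\hat{f}\subset\real_+^N$ and $S=\bigcup_{\signal{x}\in\mathrm{dom}~\partial\hat{f}}\partial\hat{f}(\signal{x})$.

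Next I would apply Fact~\ref{fact.recession_func} to $\hat{f}$ at $\signal{y}=\signal{e}_k$. Because $\signal{0}\in\mathrm{dom}~\hat{f}$, the fact is valid for every $\signal{y}\in\real^N$ and yields
\[
\hat{f}_\infty(\signal{e}_k)=\lim_{h\to 0^+} h\,\hat{f}(h^{-1}\signal{e}_k)=\lim_{h\to 0^+} h\,f(h^{-1}\signal{e}_k),
\]
where the second equality uses $h^{-1}\signal{e}_k\in\real_+^N$ for every $h>0$. Applying Fact~\ref{fact.sub_recession} to $\hat{f}$ with $\signal{y}=\signal{e}_k$ then gives
\[
\hat{f}_\infty(\signal{e}_k)=\inf\{\innerprod{\signal{g}}{\signal{e}_k}:\signal{x}\in\mathrm{dom}~\partial\hat{f},~\signal{g}\in\partial\hat{f}(\signal{x})\}=\inf\{g_k:\signal{g}\in S\}=g_k^\star.
\]
Chaining these two identities delivers the claim.

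The arguments are mostly bookkeeping; the only subtlety I expect is verifying that the $-\infty$-extension preserves upper semicontinuity on the boundary of $\real_+^N$ and that the superdifferential on $\real_+^N$ is unchanged after extension. Both facts follow from the closedness of $\real_+^N$ and the trivial satisfaction of the supergradient inequality for test vectors outside $\real_+^N$, so no serious difficulty is anticipated beyond correctly invoking the two recession-function facts.
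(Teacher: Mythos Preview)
Your proposal is correct and is essentially the same argument as the paper's: extend $f$ by $-\infty$ outside $\real_+^N$, check that the extension is proper, upper semicontinuous, and concave with matching superdifferentials, then combine Fact~\ref{fact.recession_func} with Fact~\ref{fact.sub_recession} to identify $g_k^\star$ with $\lim_{h\to 0^+} h f(h^{-1}\signal{e}_k)$. The only item the paper adds explicitly is the one-line observation $g_k^\star\ge 0$ from Lemma~\ref{slemma.positivity}, which justifies the ``$\in\real_+$'' in \refeq{eq.infimum}.
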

\begin{proof}

We have $g_k^\star\ge 0$ as a direct consequence of Lemma~\ref{slemma.positivity}. Now consider the standard extension $\tilde{f}:\real^N\to\real_{++}\cup\{-\infty\}$ of $f:\real_+^N\to\real_{++}$ given by
\begin{align*}
\tilde{f}(\signal{x})=\begin{cases}
f(\signal{x}) &\mathrm{if}~~\signal{x}\in\real^N_+\\ 
-\infty & \mathrm{otherwise,}
\end{cases}
\end{align*}
and let $k\in\mathcal{I}$ be arbitrary. By construction, $\tilde{f}$ is upper semicontinuous, proper, and concave. Furthermore, $\mathrm{dom}~\partial f = \mathrm{dom}~\partial \tilde{f}$, $\mathrm{dom}~ f = \mathrm{dom}  \tilde{f}$, and $\partial f(\signal{x})=\partial \tilde{f}(\signal{x})$ for every $\signal{x}\in \mathrm{dom} ~\partial\tilde{f}$. By Fact~\ref{fact.sub_recession}, we have $g_k^\star=\tilde{f}_\infty(\signal{e}_k)$ (see Definition~\ref{definition.recession}), and the result now follows from Fact~\ref{fact.recession_func}.
\end{proof}

Next, we show an alternative means of computing $g_k^\star$ in \refeq{eq.infimum}. This alternative method has been used in \cite{siomina12} for a very particular concave function appearing in load estimation in LTE networks (see Sect.~\ref{sect.load_planning}).

\begin{proposition}
\label{proposition.inf_2}
Let $\signal{x}\in\real^N_{+}$ and $k\in\mathcal{I}$ be arbitrary. In addition, assume that $f:\real_+^N\to\real_{++}$ is an upper semicontinuous concave function and that $\signal{x}+h\signal{e}_k\in\mathrm{dom}~ \partial f$ for every $h\ge 0$. Define by $[{g_1}(\signal{x}+h\signal{e}_k),\ldots,{g_N}(\signal{x}+h\signal{e}_k)]:= \signal{g}(\signal{x}+h\signal{e}_k)\in\partial f(\signal{x}+h\signal{e}_k)$ an arbitrary supergradient at $\signal{x}+h\signal{e}_k$. Then 
\begin{align}
 \lim_{h\to\infty} {g_k}(\signal{x}+h\signal{e}_k) = g_k^\star\ge 0,
\end{align}
where $g_k^\star$ is defined in \refeq{eq.infimum}.
\end{proposition}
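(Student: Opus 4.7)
The plan is a classical sandwich argument combined with the monotonicity and lower bound already established for supergradient components.

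First, I would show that the limit $\lim_{h\to\infty} g_k(\signal{x}+h\signal{e}_k)$ exists. Applying Lemma~\ref{slemma.nondecreasing} to the pair of points $\signal{x}+h_1 \signal{e}_k$ and $(\signal{x}+h_1 \signal{e}_k) + (h_2-h_1)\signal{e}_k = \signal{x}+h_2 \signal{e}_k$ for any $0 < h_1 < h_2$ shows that the scalar function $h \mapsto g_k(\signal{x}+h\signal{e}_k)$ is non-increasing on $[0,\infty)$, and Lemma~\ref{slemma.positivity} gives that it is bounded below by $0$. Hence the limit exists and lies in $[0,\infty)$.

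Next I would sandwich the limit between two expressions that are both equal to $g_k^\star$. For the lower bound, each vector $\signal{g}(\signal{x}+h\signal{e}_k)$ belongs to the set $S = \bigcup_{\signal{z}\in\real^N_+}\partial f(\signal{z})$ appearing in the definition \refeq{eq.infimum}, so by the definition of infimum $g_k(\signal{x}+h\signal{e}_k) \ge g_k^\star$ for every $h \ge 0$. For the upper bound, Lemma~\ref{slemma.second_ineq} gives
\begin{align*}
g_k(\signal{x}+h\signal{e}_k) \le \dfrac{f(\signal{x}+h\signal{e}_k)-f(\signal{x})}{h}.
\end{align*}
Considering the upper semicontinuous, proper, concave extension $\tilde f$ of $f$ used in the proof of Proposition~\ref{proposition.inf_1}, Definition~\ref{definition.recession} (with $\signal{x}\in\mathrm{dom}~\tilde f=\real^N_+$) implies that the right-hand side tends to $\tilde f_\infty(\signal{e}_k)$ as $h\to\infty$. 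By Fact~\ref{fact.sub_recession} together with the identification $\partial f(\signal{z})=\partial\tilde f(\signal{z})$ on $\real^N_+$, we have $\tilde f_\infty(\signal{e}_k) = \inf\{\innerprod{\signal{g}}{\signal{e}_k} : \signal{g}\in S\} = g_k^\star$.

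Combining the two bounds and passing to the limit,
\begin{align*}
g_k^\star \le \lim_{h\to\infty} g_k(\signal{x}+h\signal{e}_k) \le \lim_{h\to\infty} \dfrac{f(\signal{x}+h\signal{e}_k)-f(\signal{x})}{h} = g_k^\star,
\end{align*}
which yields the claim. The only step that requires some care is matching the various definitions of the recession function with the superdifferential inf formula, but this is handled entirely by the extension argument used in Proposition~\ref{proposition.inf_1} together with Facts~\ref{fact.recession_func} and~\ref{fact.sub_recession}; no further analytic work is needed.
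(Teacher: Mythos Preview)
Your proof is correct and follows essentially the same sandwich argument as the paper: monotonicity from Lemma~\ref{slemma.nondecreasing} gives existence of the limit, the infimum definition gives the lower bound $g_k^\star$, and Lemma~\ref{slemma.second_ineq} together with the recession-function identity $f_\infty(\signal{e}_k)=g_k^\star$ from Fact~\ref{fact.sub_recession} gives the matching upper bound. You are slightly more explicit than the paper in invoking the extension $\tilde f$ to justify applying Definition~\ref{definition.recession} and Fact~\ref{fact.sub_recession}, but the structure is identical.
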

\begin{proof}
Let $k\in\mathcal{I}$ be arbitrary. It follows from Lemma~\ref{slemma.nondecreasing} that, irrespective of the criterion we use to select a supergradient $\signal{g}(\signal{x}+h\signal{e}_k)\in\partial f(\signal{x}+h\signal{e}_k)$, its $k$th component $g_k(\signal{x}+h\signal{e}_k)$ should be monotonically non-increasing as $h$ increases (and lower bounded by 0). As a result, the limit $\lim_{h\to\infty} g_k(\signal{x}+h\signal{e}_k)$ exists. By definition, $g_k^\star$ is the infimum of the $k$th component of all supergradients, hence we have that 
\begin{align*} 
g_k^\star\le \lim_{h\to\infty} g_k(\signal{x}+h\signal{e}_k) 
\end{align*}
for any choice of $\signal{x}$ and $k$ satisfying the assumptions of the lemma. Using \refeq{eq.second_ineq} in Lemma~\ref{slemma.second_ineq} and the definition of recession functions, we deduce

\begin{multline*}
g_k^\star\le \lim_{h\to\infty} g_k(\signal{x}+h\signal{e}_k) \\ \le \lim_{h\to\infty}\dfrac{f(\signal{x}+h\signal{e}_k)-f(\signal{x})}{h}  = f_\infty(\signal{e}_k).
\end{multline*}
The result now follows by noticing that $f_\infty(\signal{e}_k)=g_k^\star$ by Fact~\ref{fact.sub_recession}. (Non-negativity of $g_k^\star$ is immediate from Lemma~\ref{slemma.positivity}.)
\end{proof}

\section{Acceleration algorithms for positive concave mappings}
\label{sect.acceleration}

Having two efficient methods to compute the component-wise infimum of supergradients of concave functions, we can now proceed with the study of general concave mappings. To avoid unnecessary technical digressions, we do not deal with concave functions $f:\real_+^N\to\real_{++}$ that are not upper semicontinuous. To formalize this assumption, we use the following definition:

\begin{definition}
(Positive concave mappings) We say that $T:\real_+^N\to \real_{++}^N$ is a positive concave mapping if it is given by
\begin{align}
\label{eq.mapping}
T(\signal{x}):=[f_1(\signal{x}),\ldots,f_N(\signal{x})]^t,
\end{align}
where all functions $f_1:\real^N_{+}\to\real_{++},\ldots, f_N:\real^N_{+}\to\real_{++}$ are concave and upper semicontinuous.
\end{definition}

By Proposition~\ref{prop.pos_conc}, we know that positive concave mappings are standard interference mappings. The remaining of this section has the objective of investigating the following problems associated with a positive concave mapping $T$ (which, as shown in Sect.~\ref{sect.applications}, are problems that need to be addressed in many network planning and optimization tasks):
\begin{itemize}

\item[P1)] Verify whether $T$ has a fixed point by using computationally efficient algorithms. 

\item[P2)] Improve the convergence speed of the standard iteration in Fact~\ref{subfact.monotone} to obtain the fixed point of $T$ (if it exists). 
\end{itemize}

\subsection{Conditions for the existence of fixed points of positive concave mappings}
To address problem P1), we use the concept of lower bounding matrices, which we define as follows:
\begin{definition}
\label{definition.lower_bounding}
 The lower bounding matrix of a positive concave mapping $T:\real_+^N\to\real_{++}^N:\signal{x}\mapsto [f_1(\signal{x}),\ldots,f_N(\signal{x})]^t$ is the non-negative matrix $\signal{M}\in\real_+^{N\times N}$ with its $i$th row and $k$th column given by
\begin{multline}
[\signal{M}]_{i,k}:=\inf\left\{g_k\in\real~|~ [g_1,\dots,g_N] \in S_i\right\}\in\real_{+},
\end{multline}
where $S_i:=\bigcup_{\signal{x}\in\real_+^N} \partial f_i(\signal{x})$. 
\end{definition}

Note that Proposition~\ref{proposition.inf_1} and Proposition~\ref{proposition.inf_2} show two simple techniques to compute each component of lower bounding matrices.

\begin{example}
\label{example.derivation}
(Construction of lower bounding matrices with the results in Propositions~\ref{proposition.inf_1} and \ref{proposition.inf_2}) Let the functions $f_1:\real^N_{+}\to\real_{++},\ldots, f_N:\real^N_{+}\to\real_{++}$ be concave and upper semicontinuous. Using Proposition~\ref{proposition.inf_1}, we can compute the lower bounding matrix $\signal{M}$ of the mapping $T:\real^N_+\to\real^N_{++}:\signal{x} \mapsto [f_1(\signal{x}),\ldots,f_N(\signal{x})]^t$ by
 \begin{multline}
 \label{eq.derivation_rec}
 \signal{M} = \\ \left[\begin{matrix} 
 \lim_{h\to 0^+} h {f}_{1}(h^{-1}\signal{e}_1) & \cdots & \lim_{h\to 0^+} h{f}_{1}(h^{-1}\signal{e}_N) \\
 \vdots & \ddots & \vdots \\
  \lim_{h\to 0^+} h {f}_{N}(h^{-1}\signal{e}_1) & \cdots & \lim_{h\to 0^+} h{f}_{N}(h^{-1}\signal{e}_N) 
 \end{matrix}\right].
 \end{multline}
Equivalently, by fixing $\signal{x}^\prime\in\real_{++}^N$ arbitrarily, we can also compute the lower bounding matrix $\signal{M}$ with the results in Proposition~\ref{proposition.inf_2} and Fact~\ref{fact.existence} as follows:
\begin{multline}
 \label{eq.derivation_supgrad}
\signal{M} = \\ \left[\begin{matrix} 
\lim_{h\to\infty} g_1^1(\signal{x}^\prime+h\signal{e}_1) & \cdots & \lim_{h\to\infty} g_N^1(\signal{x}^\prime+h\signal{e}_N) \\
\vdots & \ddots & \vdots \\
\lim_{h\to\infty} g_1^N(\signal{x}^\prime+h\signal{e}_1) & \cdots & \lim_{h\to\infty} g_N^N(\signal{x}^\prime+h\signal{e}_N) \\
\end{matrix}\right],
\end{multline}
where we denote by $g_k^i(\signal{x})$ the $k$th element of  a supergradient of $f_i$ at $\signal{x}\in\real^N_{++}$ (i.e., $[g^i_1(\signal{x}), \ldots, g^i_N(\signal{x})]^t\in\partial f_i(\signal{x})$).
\end{example}

 Proposition~\ref{proposition.inf_1} and Proposition~\ref{proposition.inf_2} also show that the lower bounding matrix is non-negative. The name ``lower bounding matrix'' stems from the fact that this matrix is constructed with component-wise lower bounds of supergradients. Lower bounding matrices can also be used to construct affine mappings that serve as lower bounds of their corresponding positive concave mappings, in the following sense:

\begin{lemma}
\label{lemma.lb}
Let $\signal{M}$ be the lower bounding matrix of a positive concave mapping $T:\real^N_+\to\real^N_{++}$ in accordance with Definition~\ref{definition.lower_bounding}. Then
\begin{align}
\label{eq.lb}
(\forall \signal{y}\in\real_{+}^N) (\forall \signal{x}\in\real_{+}^N) ~ \signal{x}\ge\signal{y}\Rightarrow T(\signal{x})\ge T(\signal{y})+\signal{M}(\signal{x}-\signal{y}).
\end{align}
\end{lemma}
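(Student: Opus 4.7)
The plan is to reduce \refeq{eq.lb} to a scalar inequality componentwise and then use the supergradient inequality together with the defining infimum property of $\signal{M}$. Stacking the $N$ inequalities
$f_i(\signal{x}) \ge f_i(\signal{y}) + \sum_{k=1}^N [\signal{M}]_{i,k}(x_k - y_k), \quad i\in\mathcal{I},$
is exactly \refeq{eq.lb}, so I focus on a single index $i$.

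First I would prove the statement under the extra assumption $\signal{x}\in\real_{++}^N$. By Fact~\ref{fact.existence} there exists a supergradient $\signal{g}=[g_1,\ldots,g_N]^t\in\partial f_i(\signal{x})$, and the defining supergradient inequality in Definition~\ref{sup.inequality} applied at the pair $(\signal{x},\signal{y})$ reads $f_i(\signal{y})\le f_i(\signal{x})+\signal{g}^t(\signal{y}-\signal{x})$, which I rearrange as $f_i(\signal{x})-f_i(\signal{y})\ge \sum_{k=1}^N g_k(x_k-y_k)$. By construction of $\signal{M}$, each component satisfies $g_k\ge [\signal{M}]_{i,k}$, and by Lemma~\ref{slemma.positivity} we also have $[\signal{M}]_{i,k}\ge 0$. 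Since the hypothesis $\signal{x}\ge\signal{y}$ gives $x_k-y_k\ge 0$, every summand obeys $g_k(x_k-y_k)\ge [\signal{M}]_{i,k}(x_k-y_k)$, so summing over $k$ delivers the desired scalar inequality.

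The main obstacle is extending this to boundary points $\signal{x}\in\real_+^N\setminus\real_{++}^N$, where Fact~\ref{fact.existence} no longer guarantees the existence of a supergradient. I would handle this by a one-sided perturbation: set $\signal{x}_\epsilon:=\signal{x}+\epsilon\signal{1}\in\real_{++}^N$ for $\epsilon>0$, where $\signal{1}$ is the all-ones vector. Then $\signal{x}_\epsilon\ge\signal{x}\ge\signal{y}$, so the interior case just proved yields $f_i(\signal{x}_\epsilon)\ge f_i(\signal{y})+[\signal{M}(\signal{x}_\epsilon-\signal{y})]_i$ for every $\epsilon>0$. The right-hand side is affine in $\epsilon$ and hence continuous; for the left-hand side, the one-variable function $\phi:[0,\infty)\to\real_{++}:\epsilon\mapsto f_i(\signal{x}+\epsilon\signal{1})$ is real-valued and concave, and fixing any $\epsilon_0>0$ the chord inequality $\phi(\epsilon)\ge(1-\epsilon/\epsilon_0)\phi(0)+(\epsilon/\epsilon_0)\phi(\epsilon_0)$ for $\epsilon\in(0,\epsilon_0)$ gives $\liminf_{\epsilon\to 0^+}\phi(\epsilon)\ge\phi(0)=f_i(\signal{x})$, while upper semicontinuity of $f_i$ (part of the standing assumption on positive concave mappings) supplies the matching $\limsup_{\epsilon\to 0^+}\phi(\epsilon)\le f_i(\signal{x})$. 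Passing to the limit $\epsilon\to 0^+$ in the perturbed inequality therefore preserves it and closes the proof.
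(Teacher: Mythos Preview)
Your proof is correct and follows essentially the same route as the paper: work componentwise, invoke the supergradient inequality at an interior perturbation $\signal{x}+\epsilon\signal{1}$ (the paper uses $\signal{x}+(1/n)\signal{1}$), replace the supergradient by the row of $\signal{M}$ using $\signal{x}_\epsilon\ge\signal{y}$, and pass to the limit via upper semicontinuity. The only difference is cosmetic: you additionally establish $\liminf_{\epsilon\to 0^+}\phi(\epsilon)\ge\phi(0)$ via the chord inequality, but this is unnecessary---the paper simply uses $\limsup_{n\to\infty} f_i(\signal{x}_n)\le f_i(\signal{x})$, which already suffices to carry the inequality through the limit.
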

\begin{proof}
We prove the inequality for an arbitrary component of the mapping $T$; i.e., for the function $f_i$, where $i\in\mathcal{I}$ is arbitrary. Let $\signal{y}\in\real^N_+$ and  $\signal{x}\ge\signal{y}$ be arbitrary vectors, and construct the sequence  $\{\signal{x}_n:=(1/n)\signal{1}+\signal{x}\}_{n\in\Natural}\subset\real_{++}^N$. By Fact~\ref{fact.existence}, we have $\signal{x}_n\in\mathrm{dom}~\partial f_i$ for every $n\in\Natural$. From the definition of supergradients, we know that, for every $n\in\Natural$,
\begin{align}
\label{eq.basic_inequality}
f_i(\signal{y})+\signal{g}^t_n(\signal{x}_n-\signal{y})\le f_i(\signal{x}_n),
\end{align}
where $\signal{g}_n\in\partial f_i(\signal{x}_n)$ is an arbitrary supergradient.  By Definition~\ref{definition.lower_bounding}, the $i$th row of $\signal{M}$, denoted by $\signal{m}_i \ge \signal{0}$ as a column vector, is the component-wise infimum of all supergradients of the function $f_i$, hence $\signal{0}\le\signal{m}_i\le\signal{g}_n$ for every $n\in\Natural$. Using this last relation together with $\signal{x}_n\ge\signal{y}$ in \refeq{eq.basic_inequality}, we deduce:
\begin{multline*}
(\forall n\in\Natural) \\
f_i(\signal{y})+\signal{m}^t_i(\signal{x}_n-\signal{y})\le f_i(\signal{y})+\signal{g}^t_n(\signal{x}_n-\signal{y})\le f_i(\signal{x}_n).
\end{multline*}
 By construction, $\lim_{n\to\infty} \signal{x}_n=\signal{x}$. As a result, we conclude from the continuity of affine functions and upper semi-continuity of $f_i$ that
\begin{multline*}
f_i(\signal{y})+\signal{m}^t_i(\signal{x}-\signal{y})=\limsup_{n\to\infty} (f_i(\signal{y})+\signal{m}^t_i(\signal{x}_n-\signal{y}))   \\ \le \limsup_{n\to\infty} f_i(\signal{x}_n) \le  f_i(\signal{x}).
\end{multline*}
\end{proof}

The next proposition addresses problem P1) stated in the beginning of this section:
\begin{proposition}
\label{proposition.spec_radius}
Let $\signal{M}$ be the lower bounding matrix of a positive concave mapping $T:\real_+^N\to\real_{++}^N$. A necessary condition for $\mathrm{Fix}(T)\neq \emptyset$ is $\rho(\signal{M})<1$.

\end{proposition}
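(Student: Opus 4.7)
The plan is to reduce the problem to Fact~\ref{fact.spec_radius} by constructing an auxiliary affine standard interference mapping whose existence of a fixed point is forced by the existence of a fixed point of $T$. The bridge between the concave and affine worlds is Lemma~\ref{lemma.lb}, which quantifies in what sense the lower bounding matrix gives an affine minorant of $T$.

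Concretely, suppose $\mathrm{Fix}(T) \neq \emptyset$, and let $\signal{x}^\star \in \real_{++}^N$ denote the (unique, by Fact~\ref{fact.uniqueness}) fixed point of $T$. Applying Lemma~\ref{lemma.lb} with $\signal{y} = \signal{0}$ and $\signal{x} = \signal{x}^\star \ge \signal{0}$ yields
\begin{equation*}
\signal{x}^\star = T(\signal{x}^\star) \ge T(\signal{0}) + \signal{M}\signal{x}^\star.
\end{equation*}
Define $\signal{p} := T(\signal{0}) \in \real_{++}^N$ (strict positivity holds because $T$ maps into $\real_{++}^N$) and introduce the affine mapping $T_a:\real_+^N \to \real_{++}^N:\signal{x}\mapsto \signal{p} + \signal{M}\signal{x}$.

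Next, I would verify that $T_a$ is itself a standard interference mapping in the sense of Definition~\ref{definition.inter_func}: monotonicity follows from $\signal{M}\ge\signal{0}$, and scalability from $\alpha T_a(\signal{x}) - T_a(\alpha\signal{x}) = (\alpha-1)\signal{p} > \signal{0}$ for every $\alpha > 1$. Having established this, the displayed inequality above reads $T_a(\signal{x}^\star) \le \signal{x}^\star$, so Fact~\ref{subfact.upper_bound} guarantees that $T_a$ has a fixed point $\signal{x}_a^\star \in \real_{++}^N$, i.e., a strictly positive solution of the linear system $\signal{x} = \signal{p} + \signal{M}\signal{x}$.

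Finally, since $\signal{M}\in\real_+^{N\times N}$ is non-negative and $\signal{p}\in\real_{++}^N$, Fact~\ref{fact.spec_radius} immediately yields $\rho(\signal{M}) < 1$, as desired. There is no real obstacle in this argument; the only delicate point is ensuring that the offset vector fed into Fact~\ref{fact.spec_radius} is \emph{strictly} positive, which is why it is important to evaluate the concave mapping at $\signal{0}$ and exploit that the codomain of $T$ is $\real_{++}^N$ rather than $\real_+^N$.
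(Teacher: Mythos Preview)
Your proof is correct and follows essentially the same route as the paper: apply Lemma~\ref{lemma.lb} with $\signal{y}=\signal{0}$ to bound $T$ from below by the affine map $\signal{x}\mapsto T(\signal{0})+\signal{M}\signal{x}$, use Fact~\ref{subfact.upper_bound} at the fixed point $\signal{x}^\star$ to produce a strictly positive solution of $\signal{x}=T(\signal{0})+\signal{M}\signal{x}$, and conclude $\rho(\signal{M})<1$ via Fact~\ref{fact.spec_radius}. The only cosmetic difference is that the paper certifies the affine map is a standard interference mapping by invoking Proposition~\ref{prop.pos_conc} (affine $\Rightarrow$ concave $\Rightarrow$ standard), whereas you verify scalability and monotonicity directly; both are equally valid.
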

\begin{proof}
Use $\signal{y}=\signal{0}$ in \refeq{eq.lb} to verify that the affine mapping $\tl:\real_+^N\to\real_{++}^N:\signal{x}\mapsto T(\signal{0})+\signal{M}\signal{x}$ satisfies $T(\signal{x})\ge \tl(\signal{x})>\signal{0}$ for every $\signal{x}\in\real_{+}^N$. Being an affine mapping, $T_L$ is a positive concave mapping, hence it is also a standard interference mapping by Proposition~\ref{prop.pos_conc}. Now let $\signal{x}^\star\in\real_{++}^N$ be the fixed point of the mapping $T$. By Lemma~\ref{lemma.lb}, we obtain:
\begin{align}
\label{eq.existence}
\tl(\signal{x}^\star)\le T(\signal{x}^\star)=\signal{x}^\star,
\end{align}
which implies the existence of the (unique) fixed point of the mapping $\tl$ by Fact~\ref{fact.inter_map}.2. In other words, there exists a unique positive vector $\widehat{\signal{x}}\in\real^N_{++}$ satisfying $\widehat{\signal{x}} = T(\signal{0})+\signal{M}\widehat{\signal{x}}$, and we know by Fact~\ref{fact.spec_radius} that there exists a positive vector satisfying this last equality if and only if $\rho(\signal{M})< 1$ (recall that $T(\signal{0})>\signal{0}$ and that $\signal{M}\in\real_+^{N\times N}$ by construction). 
\end{proof}

An immediate consequence of Fact~\ref{fact.neuman} and Proposition~\ref{proposition.spec_radius} is the following useful result:

\begin{Cor}
\label{cor.matrix_inverse}
Let $T:\real_{+}^N\to\real_{++}^N$ be a positive concave mapping with $\mathrm{Fix}(T)\neq\emptyset$, and denote by $\signal{M}\in\real_+^{N \times N}$ its existing lower bounding matrix given by Definition~\ref{definition.lower_bounding}. Then $(\signal{I}-\signal{M})^{-1}$ exists, and it is a non-negative matrix.
\end{Cor}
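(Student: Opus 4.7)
The plan is to chain together the two cited results in a very direct way, using only that $\signal{M}$ is non-negative by construction (Definition~\ref{definition.lower_bounding}).

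First I would invoke Proposition~\ref{proposition.spec_radius}: since $T$ is a positive concave mapping with $\mathrm{Fix}(T)\neq\emptyset$, the lower bounding matrix $\signal{M}$ satisfies $\rho(\signal{M})<1$. This immediately feeds into Fact~\ref{fact.neuman}, which guarantees that the Neumann series converges and gives the explicit representation
\begin{align*}
(\signal{I}-\signal{M})^{-1}=\signal{I}+\sum_{k=1}^\infty \signal{M}^k,
\end{align*}
so in particular the inverse exists.

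The second claim (non-negativity) is then read off from the representation. Since $\signal{M}\in\real_+^{N\times N}$ by Definition~\ref{definition.lower_bounding}, every power $\signal{M}^k$ is a product of non-negative matrices and hence lies in $\real_+^{N\times N}$. The identity $\signal{I}$ is also non-negative, so each partial sum $\signal{I}+\sum_{k=1}^{K}\signal{M}^k$ is non-negative, and convergence in $\real^{N\times N}$ (entry-wise) preserves this property in the limit. Therefore $(\signal{I}-\signal{M})^{-1}\in\real_{+}^{N\times N}$, completing the proof.

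There is essentially no obstacle here: both ingredients are already in place, and the only thing to remark on is that convergence of a non-negative sequence of matrices to a limit yields a non-negative limit, which is immediate from the entry-wise interpretation. The corollary really amounts to combining Proposition~\ref{proposition.spec_radius} (to activate the spectral radius hypothesis of Fact~\ref{fact.neuman}) with the non-negativity of $\signal{M}$ (to propagate non-negativity through the Neumann series).
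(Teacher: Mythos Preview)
Your proof is correct and matches the paper's approach exactly: the paper states the corollary as ``an immediate consequence of Fact~\ref{fact.neuman} and Proposition~\ref{proposition.spec_radius}'' without further argument, which is precisely the chain you wrote out. There is nothing to add.
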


Proposition~\ref{proposition.spec_radius} is interesting in its own right because it enables us to certify that a given positive concave mapping has no fixed point. We only need to show that the spectral radius of its lower bounding matrix has spectral radius greater than or equal to one. This result is highly relevant in network optimization and planning problems. As already mentioned in the introduction, in these applications, the feasibility of a network design follows from the existence of the fixed point of a mapping that is constructed based on antenna tilts, power allocations, the position of base stations, etc. Optimization of the network performance (e.g., in terms of energy efficiency, capacity, coverage, etc.) over the joint set of all control parameters is typically an NP-hard problem. As a result, many optimization algorithms proposed in the literature are greedy heuristics that need a fast feasibility check of multiple network configurations at each iteration \cite{Majewski2010}. Proposition~\ref{proposition.spec_radius} opens up the door to the development of efficient and fast methods for excluding many infeasible network configurations from consideration, which can significantly accelerate the overall optimization process.

We emphasize that the converse of Proposition~\ref{proposition.spec_radius} does not hold in general. There are mappings for which the lower bounding matrix has spectral radius strictly less than one, and yet mappings do not have a fixed point (see the application in Sect.~\ref{sect.power_planning}). Therefore, to characterize the existence of a fixed point based on the spectral radius of the lower bounding matrix, we need additional assumptions on the mapping. The next proposition shows a particularly useful assumption that is satisfied in load estimation problems (see Sect.~\ref{sect.load_planning} and \cite{siomina12} for a particular application of this proposition).

\begin{proposition}
\label{prop.sufficiency}
Let ${T}:\real_{+}^N\to\real_{++}^N$ be a positive concave mapping with lower bounding matrix $\signal{M}$ satisfying $\rho(\signal{M})<1$. In addition, assume that 
\begin{align}
\label{eq.upper_value}
(\exists \signal{y}\in\real_{++}^N)(\forall \signal{x}\in\real_{+}^N)\quad  {T}(\signal{x})\le \signal{y}+\signal{M}\signal{x}.
\end{align}
Then the mapping $T$ has a fixed point.
\end{proposition}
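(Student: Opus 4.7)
The plan is to reduce the problem to showing the existence of a vector $\signal{x}^\star \in \real^N$ satisfying $T(\signal{x}^\star) \le \signal{x}^\star$, and then invoke Fact~\ref{subfact.upper_bound} to conclude that $\mathrm{Fix}(T) \ne \emptyset$. The natural candidate for $\signal{x}^\star$ is the (unique) fixed point of the affine upper-bounding mapping $\tl':\real_+^N \to \real_{++}^N:\signal{x} \mapsto \signal{y} + \signal{M}\signal{x}$ produced by the hypothesis in \refeq{eq.upper_value}.

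First, I would verify that $\tl'$ actually has a positive fixed point. Since $\signal{M} \in \real_+^{N\times N}$ by construction of the lower bounding matrix, $\signal{y} \in \real_{++}^N$ by assumption, and $\rho(\signal{M}) < 1$, Fact~\ref{fact.spec_radius} immediately guarantees the existence of a vector $\signal{x}^\star \in \real_{++}^N$ satisfying $\signal{x}^\star = \signal{y} + \signal{M}\signal{x}^\star$. (Equivalently, one can write $\signal{x}^\star = (\signal{I}-\signal{M})^{-1}\signal{y}$ using Fact~\ref{fact.neuman}, which also makes clear that $\signal{x}^\star$ is positive since $(\signal{I}-\signal{M})^{-1}$ is non-negative with positive diagonal entries.)

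Second, I would substitute $\signal{x}^\star$ into the hypothesis \refeq{eq.upper_value} to obtain
\begin{align*}
T(\signal{x}^\star) \le \signal{y} + \signal{M}\signal{x}^\star = \signal{x}^\star.
\end{align*}
Since $T$ is a positive concave mapping, it is a standard interference mapping by Proposition~\ref{prop.pos_conc}, so Fact~\ref{subfact.upper_bound} applies and yields $\mathrm{Fix}(T) \ne \emptyset$.

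I do not anticipate a substantive obstacle: the result is essentially a packaging of Fact~\ref{fact.spec_radius} (which supplies the fixed point of the affine upper bound) with Fact~\ref{subfact.upper_bound} (which converts an upper-bounding inequality for a standard interference mapping into the existence of a fixed point). The only minor care needed is to ensure that the matrix $\signal{M}$ and vector $\signal{y}$ satisfy the sign hypotheses of Fact~\ref{fact.spec_radius}, which is immediate from Definition~\ref{definition.lower_bounding} and the statement of the proposition.
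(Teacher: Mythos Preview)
Your proposal is correct and follows essentially the same argument as the paper: use Fact~\ref{fact.spec_radius} to obtain the positive fixed point $\signal{x}^\star=(\signal{I}-\signal{M})^{-1}\signal{y}$ of the affine upper bound, substitute into \refeq{eq.upper_value} to get $T(\signal{x}^\star)\le\signal{x}^\star$, and conclude via Fact~\ref{subfact.upper_bound}. The only cosmetic difference is that you explicitly note $T$ is a standard interference mapping (via Proposition~\ref{prop.pos_conc}) before invoking Fact~\ref{subfact.upper_bound}, which the paper leaves implicit.
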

\begin{proof}
Let $\signal{y}^\prime\in\real_{++}^N$ be a vector satisfying ${T}(\signal{x})\le \signal{y}^\prime+\signal{M}\signal{x}$ for every $\signal{x}\in\real_+^N$. By Fact.~\ref{fact.spec_radius}, we know that $\signal{x}^\prime:=(\signal{I}-\signal{M})^{-1}\signal{y}^\prime$ is a strictly positive vector. Therefore,
\begin{align*}
T(\signal{x}^\prime)\le \signal{y}^\prime + \signal{M}\signal{x}^\prime = (\signal{I}-\signal{M})\signal{x}^\prime + \signal{M}\signal{x}^\prime = \signal{x}^\prime,
\end{align*}
and the above implies that $\mathrm{Fix}(T)\ne\emptyset$ by Fact.~\ref{subfact.upper_bound}. \par

\end{proof}

\subsection{Acceleration techniques for positive concave mappings}
\label{sect.algorithm}

We now turn our attention to problem P2). To address this problem, we use the concept of accelerated mappings, which we define as follows:
\begin{definition}
(Accelerated mappings) Let $T:\real_{+}^N\to\real_{++}^N$ be a positive concave mapping and $\signal{M}$ be its lower bounding matrix. If $\rho(\signal{M})<1$, the accelerated mapping $\ta:\real_{+}^N\to\real_{++}^N$ of $T$ is the mapping given by:
\begin{align}
\label{eq.ta}
\ta(\signal{x}):=(\signal{I}-\signal{M})^{-1}(T(\signal{x})-\signal{Mx}).
\end{align}
\end{definition}

To see that the codomain of $\ta$ in the above definition is indeed $\real^N_{++}$, note that, by Lemma \ref{lemma.lb}, we have that $T(\signal{x})-\signal{M}\signal{x}\ge T(\signal{0})>\signal{0}$ for $\signal{x}\in\real_{+}^N$. Now use Fact~\ref{fact.neuman} to conclude that $(\signal{I}-\signal{M})^{-1}(T(\signal{x})-\signal{Mx})$ is a (strictly) positive vector for every $\signal{x}\in\real_{+}^N$.

The next lemma shows an alternative way to compute accelerated mappings. The main advantage of this alternative expression is computational. We have to perform only one matrix-vector multiplication.

\begin{lemma}
Let $\ta:\real_{+}^N\to\real_{++}^N$ be the accelerated mapping of the concave mapping $T:\real_{+}^N\to\real_{++}^N$, where we assume that the lower bounding matrix $\signal{M}$ of $T$ satisfies $\rho(\signal{M})<1$. Then $\ta$ in \refeq{eq.ta} can be equivalently expressed as 
\begin{align}
\label{eq.alternative}
\ta(\signal{x}) = (\signal{I}-\signal{M})^{-1}(T(\signal{x})-\signal{x})+\signal{x}.
\end{align}
\end{lemma}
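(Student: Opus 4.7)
The proof is a direct algebraic manipulation, so the plan is short. First I would note that the hypothesis $\rho(\signal{M})<1$ ensures, via Fact~\ref{fact.neuman}, that $\signal{I}-\signal{M}$ is invertible, so $(\signal{I}-\signal{M})^{-1}(\signal{I}-\signal{M})=\signal{I}$ and the manipulations below are legitimate.

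The key observation is to add and subtract $\signal{x}$ inside the argument of $(\signal{I}-\signal{M})^{-1}$ in the defining expression \refeq{eq.ta}. Writing
\begin{align*}
T(\signal{x})-\signal{M}\signal{x} = \bigl(T(\signal{x})-\signal{x}\bigr) + (\signal{I}-\signal{M})\signal{x},
\end{align*}
and then applying $(\signal{I}-\signal{M})^{-1}$ to both sides while exploiting linearity yields
\begin{align*}
\ta(\signal{x}) &= (\signal{I}-\signal{M})^{-1}\bigl(T(\signal{x})-\signal{M}\signal{x}\bigr) \\
&= (\signal{I}-\signal{M})^{-1}\bigl(T(\signal{x})-\signal{x}\bigr) + (\signal{I}-\signal{M})^{-1}(\signal{I}-\signal{M})\signal{x} \\
&= (\signal{I}-\signal{M})^{-1}\bigl(T(\signal{x})-\signal{x}\bigr) + \signal{x},
\end{align*}
which is exactly \refeq{eq.alternative}.

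There is essentially no obstacle here; the only thing worth mentioning is that the identity $(\signal{I}-\signal{M})^{-1}(\signal{I}-\signal{M})=\signal{I}$ requires invertibility of $\signal{I}-\signal{M}$, which is guaranteed by the standing assumption $\rho(\signal{M})<1$ together with Fact~\ref{fact.neuman}. The computational remark in the surrounding text (that \refeq{eq.alternative} saves a matrix--vector product compared with \refeq{eq.ta}, since $\signal{Mx}$ need not be formed explicitly) is not part of the statement to be proved and so needs no justification in the proof itself.
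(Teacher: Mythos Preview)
Your proof is correct and is essentially the same algebraic manipulation as the paper's: both use $T(\signal{x})-\signal{M}\signal{x}=(T(\signal{x})-\signal{x})+(\signal{I}-\signal{M})\signal{x}$ together with invertibility of $\signal{I}-\signal{M}$ from Fact~\ref{fact.neuman}. The only cosmetic difference is that the paper runs the equivalence as a chain of $\Leftrightarrow$ steps starting from \refeq{eq.alternative}, whereas you start from the definition \refeq{eq.ta} and compute forward; the content is identical.
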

\begin{proof}
Recalling that the matrix $\signal{I}-\signal{M}$ is invertible as a direct consequence of Fact~\ref{fact.neuman}, we deduce:
\begin{multline*}
\ta(\signal{x}) = (\signal{I}-\signal{M})^{-1}(T(\signal{x})-\signal{x})+\signal{x} \\ \Leftrightarrow
(\signal{I}-\signal{M})\ta(\signal{x})=T(\signal{x})-\signal{x}+(\signal{I}-\signal{M})\signal{x} \\ \Leftrightarrow (\signal{I}-\signal{M})\ta(\signal{x}) = T(\signal{x})-\signal{Mx} \\ \Leftrightarrow \ta(\signal{x})=(\signal{I}-\signal{M})^{-1}(T(\signal{x})-\signal{Mx}).
\end{multline*}
\end{proof}

Positive concave mappings and their corresponding accelerated mappings have many common characteristics. In particular, they are both standard interference mappings, and they have the same fixed point, as shown below.

\begin{lemma}
\label{lemma.shared}
Assume that $T:\real_{+}^N\to\real_{++}^N$ is a positive concave mapping with lower bounding matrix $\signal{M}$ satisfying $\rho(\signal{M})<1$. Then the accelerated mapping $\ta:\real_{+}^N\to\real_{++}^N$ of $T$ is a standard interference mapping. Furthermore, $\mathrm{Fix}(T)=\mathrm{Fix}(\ta)$.
\end{lemma}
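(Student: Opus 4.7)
The plan is to verify the two conditions of Definition~\ref{definition.inter_func} (scalability and monotonicity) for $\ta$, and then to establish the equality of fixed-point sets by a direct algebraic manipulation. Throughout, I would work with the form \refeq{eq.alternative} of $\ta$ for the fixed-point part and the original form \refeq{eq.ta} for the interference-function part, switching between them as convenient. The key ingredients are already in place: Lemma~\ref{lemma.lb}, Corollary~\ref{cor.matrix_inverse} (which guarantees that $(\signal{I}-\signal{M})^{-1}$ exists and is non-negative), Fact~\ref{fact.neuman} (which gives the Neumann expansion $(\signal{I}-\signal{M})^{-1}=\signal{I}+\sum_{k\ge 1}\signal{M}^k$), and the fact that $T$ is itself a standard interference mapping by Proposition~\ref{prop.pos_conc}.

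For \emph{monotonicity}, let $\signal{x}_1\ge\signal{x}_2\ge\signal{0}$. Lemma~\ref{lemma.lb} gives $T(\signal{x}_1)-\signal{M}\signal{x}_1\ge T(\signal{x}_2)-\signal{M}\signal{x}_2$, and multiplying this component-wise inequality by the non-negative matrix $(\signal{I}-\signal{M})^{-1}$ preserves it, yielding $\ta(\signal{x}_1)\ge \ta(\signal{x}_2)$. For \emph{scalability}, fix $\alpha>1$ and $\signal{x}\in\real_+^N$ and compute directly from \refeq{eq.ta} that
\begin{align*}
\alpha\ta(\signal{x})-\ta(\alpha\signal{x}) = (\signal{I}-\signal{M})^{-1}\bigl(\alpha T(\signal{x})-T(\alpha\signal{x})\bigr),
\end{align*}
since the $\signal{Mx}$ terms cancel. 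By Proposition~\ref{prop.pos_conc} the mapping $T$ is standard interference, hence $\alpha T(\signal{x})-T(\alpha\signal{x})>\signal{0}$ component-wise. The main subtlety is to ensure this strict positivity survives the multiplication by $(\signal{I}-\signal{M})^{-1}$: for this I would invoke Fact~\ref{fact.neuman} to write $(\signal{I}-\signal{M})^{-1}=\signal{I}+\sum_{k\ge 1}\signal{M}^k\ge\signal{I}$, so for any strictly positive vector $\signal{v}$ one has $(\signal{I}-\signal{M})^{-1}\signal{v}\ge\signal{v}>\signal{0}$. Applying this with $\signal{v}=\alpha T(\signal{x})-T(\alpha\signal{x})$ establishes $\alpha\ta(\signal{x})>\ta(\alpha\signal{x})$. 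Positivity of the range of $\ta$ was already observed in the paragraph preceding the lemma, so $\ta:\real_+^N\to\real_{++}^N$ is a standard interference mapping.

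For the equality $\mathrm{Fix}(T)=\mathrm{Fix}(\ta)$, I would use form \refeq{eq.alternative}: a point $\signal{x}^\star\in\real_+^N$ satisfies $\ta(\signal{x}^\star)=\signal{x}^\star$ if and only if $(\signal{I}-\signal{M})^{-1}(T(\signal{x}^\star)-\signal{x}^\star)=\signal{0}$, and since $(\signal{I}-\signal{M})^{-1}$ is invertible (Fact~\ref{fact.neuman}) this is equivalent to $T(\signal{x}^\star)=\signal{x}^\star$. Equivalently, one may observe that any such fixed point lies in $\real_{++}^N$ thanks to Fact~\ref{fact.uniqueness} applied to either mapping, which rules out spurious solutions on the boundary.

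The only step I anticipate as a mild obstacle is the preservation of strict inequality in the scalability check: a non-negative matrix can map a strictly positive vector to a vector with zero components, so the argument has to exploit some additional structure of $(\signal{I}-\signal{M})^{-1}$. The Neumann-series identity from Fact~\ref{fact.neuman} handles this cleanly because the identity summand guarantees each row of $(\signal{I}-\signal{M})^{-1}$ has a strictly positive entry on the diagonal, which is enough to preserve strict positivity when acting on $\alpha T(\signal{x})-T(\alpha\signal{x})>\signal{0}$.
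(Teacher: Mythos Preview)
Your proof is correct but takes a different route from the paper's. The paper does not verify scalability and monotonicity of $\ta$ directly; instead it observes that $\ta$ is itself a \emph{positive concave mapping}: since $(\signal{I}-\signal{M})^{-1}$ is non-negative, each component of $(\signal{I}-\signal{M})^{-1}T(\signal{x})$ is a non-negative combination of concave functions (hence concave), and adding the linear term $-(\signal{I}-\signal{M})^{-1}\signal{x}+\signal{x}$ preserves concavity. Proposition~\ref{prop.pos_conc} then delivers the standard-interference property for free. This buys a slightly stronger conclusion --- $\ta$ is positive concave, not merely a standard interference mapping --- which in principle would allow the acceleration construction to be iterated. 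Your direct verification, by contrast, is self-contained and makes explicit where the strict inequality in scalability comes from (the identity summand in the Neumann series guaranteeing $(\signal{I}-\signal{M})^{-1}\ge\signal{I}$). For the fixed-point equality both arguments amount to the same algebraic manipulation, just phrased via \refeq{eq.alternative} in your case and via \refeq{eq.ta} in the paper's. One minor citation slip: Corollary~\ref{cor.matrix_inverse} assumes $\mathrm{Fix}(T)\ne\emptyset$, which is not a hypothesis of the lemma; Fact~\ref{fact.neuman} together with $\rho(\signal{M})<1$ is the correct reference, and indeed that is what you actually use when it matters.
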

\begin{proof}
By $\rho(\signal{M})<1$, the matrix inverse $(\signal{I}-\signal{M})^{-1}$ exists, and it is a non-negative matrix (Fact~\ref{fact.neuman}). As a result, each component of the mapping $T^\prime(\signal{x}):=(\signal{I}-\signal{M})^{-1} T(\signal{x})$ ($\signal{x}\in\real_{+}^N$) is a positive sum of concave functions, hence the resulting function is also concave. Observing that linear functions are both concave and convex, we verify that each component of $\ta(\signal{x})=T^\prime(\signal{x})-(\signal{I}-\signal{M})^{-1}\signal{x}+\signal{x}>\signal{0}$ is a positive sum of concave functions, hence $\ta$ is a positive concave mapping. Proposition~\ref{prop.pos_conc} now shows that $\ta$ is a standard interference mapping. Consequently, $\ta$ has a unique fixed point, if it exists (Fact~\ref{fact.uniqueness}). If $\signal{x}^\star\in\mathrm{Fix}(T)$, then $\ta(\signal{x}^\star)=(\signal{I}-\signal{M})^{-1} (T(\signal{x}^\star)-\signal{Mx}^\star) =(\signal{I}-\signal{M})^{-1} (\signal{x}^\star-\signal{Mx}^\star)=(\signal{I}-\signal{M})^{-1} (\signal{I}-\signal{M}) \signal{x}^\star = \signal{x}^\star$. The converse is also immediate. Note that $T(\signal{x})=(\signal{I}-\signal{M})\ta{(\signal{x})}+\signal{M}\signal{x}$, hence ${T}(\signal{x}^\star)=\signal{x}^\star$ if $\ta(\signal{x}^\star)=\signal{x}^\star$, and we conclude that $\mathrm{Fix}(T)=\mathrm{Fix}(\ta)$.
\end{proof}

The practical implication of Lemma~\ref{lemma.shared} is that, to compute the fixed point of a positive concave mapping $T$, we can instead compute the fixed point of its accelerated version $\ta$ by using the standard iteration $\signal{x}_{n+1}=\ta(\signal{x}_n)$ shown in Fact~\ref{fact.inter_map}. In many applications, having a monotone sequence $\{\signal{x}_{n}\}_{n\in\Natural}$ is desirable, and a sequence of this type can be constructed with the standard fixed point iteration by starting the iterations from $\signal{x}_1=\signal{0}$ (see Fact~\ref{subfact.monotone}). For example, in network planning and optimization tasks, the fixed points of the concave mappings are estimates of the power allocation or of the load at the base stations \cite{Majewski2010,siomina12,feh2013, renato14SPM,renato14SIP,ho2015}. Therefore, even if the mapping has a fixed point, the network design is invalid if the power or load of any base station exceeds its physical limit. If the iterative algorithm produces a monotonically increasing sequence, we obtain a certificate that the design is invalid as soon as any element of the vector sequence exceeds its limit. It is particularly in these cases that the standard iteration with the accelerated mapping $\ta$ converges faster than the standard iteration with the original mapping $T$, in the following sense:

\begin{definition}\label{definition.faster} ({\it}Faster convergence)
\label{def.no_slower} Let $\{\signal{x}_n\}_{n\in\Natural}\subset \real^N$ and $\{\signal{y}_n\}_{n\in\Natural}\subset \real^N$ be two sequences converging to the same vector $\signal{u}^\star\in\real^N$. We say that the sequence $\{\signal{x}_n\}_{n\in\Natural}\subset \real^N$ converges faster than $\{\signal{y}_n\}_{n\in\Natural}\subset \real^N$ if $\|\signal{x}_n-\signal{u}^\star\|_\infty \le \|\signal{y}_n-\signal{u}^\star\|_\infty$ for every $n\in\Natural$. 
\end{definition}

With the above definition, we can now formally state the improvement obtained by using $\ta$ instead of $T$ with the standard iteration in Fact~\ref{subfact.monotone}.

\begin{proposition}
\label{proposition.faster}
Assume that $T:\real_{+}^N\to\real_{++}^N$ is a positive concave mapping with lower bounding matrix $\signal{M}\in\real^{N\times N}$ satisfying $\rho(\signal{M})<1$. Let $\ta:\real_{+}^N\to\real_{++}^N$ be the accelerated mapping of $T$. Consider the following two sequences: $\{\signal{x}^\prime_{n+1}:=T(\signal{x}^\prime_{n})\}_{n\in\Natural}$ and $\{\signal{x}^\dprime_{n+1}:=\ta(\signal{x}^\dprime_{n})\}_{n\in\Natural}$. Assume that both sequences start from the same vector $\signal{u}\in\real_{+}^N$; i.e.,  $\signal{u}=\signal{x}^\prime_1=\signal{x}^\dprime_1$. If $\{\signal{x}^\prime_{n}\}_{n\in\Natural}$ is monotonically increasing (resp. monotonically decreasing) in each component, then the following holds:
\begin{subproposition}
 $\{\signal{x}^\dprime_{n}\}_{n\in\Natural}$ is monotonically increasing (resp. monotonically decreasing) in each component.
\end{subproposition}

\begin{subproposition}
$\signal{x}_n^{\dprime} \ge \signal{x}_n^{\prime}$ (resp. $\signal{x}_n^{\dprime} \le \signal{x}_n^{\prime}$ ) for every $n\in\Natural$.
\end{subproposition}

\begin{subproposition} If the mapping $T$ has a fixed point (which, in particular, it is automatically guaranteed if $\{\signal{x}_n^\prime\}_{n\in\Natural}$ is monotonically decreasing in each component), then $\{\signal{x}^\dprime_{n}\}_{n\in\Natural}$  converges faster than $\{\signal{x}^\prime_{n}\}_{n\in\Natural}$ to $\signal{x}^\star\in\mathrm{Fix}(T)$, in the sense of Definition~\ref{def.no_slower}.
\end{subproposition}

\end{proposition}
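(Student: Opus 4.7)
The plan is to exploit the alternative expression \refeq{eq.alternative} to derive two identities:
\begin{align*}
\ta(\signal{x})-\signal{x} &= (\signal{I}-\signal{M})^{-1}(T(\signal{x})-\signal{x}), \\
\ta(\signal{x})-T(\signal{x}) &= \left[(\signal{I}-\signal{M})^{-1}-\signal{I}\right](T(\signal{x})-\signal{x}).
\end{align*}
Because $\rho(\signal{M})<1$, Fact~\ref{fact.neuman} gives $(\signal{I}-\signal{M})^{-1}\ge\signal{0}$ and $(\signal{I}-\signal{M})^{-1}-\signal{I}=\sum_{k=1}^\infty\signal{M}^k\ge\signal{0}$. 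Hence the component-wise sign of both $\ta(\signal{x})-\signal{x}$ and $\ta(\signal{x})-T(\signal{x})$ matches that of $T(\signal{x})-\signal{x}$, and this single observation is the engine behind all three claims.

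For the first subproposition, I would invoke Lemma~\ref{lemma.shared} (so $\ta$ is a standard interference mapping) and Fact~\ref{subfact.monotone}. In the increasing case, $T(\signal{u})\ge\signal{u}$, so the first identity gives $\ta(\signal{u})\ge\signal{u}$, which triggers the monotonicity conclusion of Fact~\ref{subfact.monotone} for the $\ta$-iteration. The decreasing case is symmetric, and Fact~\ref{subfact.upper_bound} automatically supplies $\mathrm{Fix}(T)\ne\emptyset$ there, so the proposition's hypothesis is self-consistent.

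For the second subproposition I would argue by induction, with the base case immediate from $\signal{x}_1^\prime=\signal{x}_1^\dprime=\signal{u}$. For the inductive step in the increasing case, I would chain two inequalities: monotonicity of the standard interference mapping $\ta$ gives $\ta(\signal{x}_n^\dprime)\ge\ta(\signal{x}_n^\prime)$ from the inductive hypothesis, while the second identity applied at $\signal{x}_n^\prime$, together with $T(\signal{x}_n^\prime)-\signal{x}_n^\prime=\signal{x}_{n+1}^\prime-\signal{x}_n^\prime\ge\signal{0}$, yields $\ta(\signal{x}_n^\prime)\ge T(\signal{x}_n^\prime)=\signal{x}_{n+1}^\prime$. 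Reversing every inequality handles the decreasing case.

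For the third subproposition, existence of $\signal{x}^\star\in\mathrm{Fix}(T)=\mathrm{Fix}(\ta)$ is either hypothesized or supplied by Fact~\ref{subfact.upper_bound} (decreasing case), and Fact~\ref{subfact.monotone} then makes both iterations converge to $\signal{x}^\star$. Since a monotonically increasing (resp.\ decreasing) sequence converging to $\signal{x}^\star$ stays below (resp.\ above) $\signal{x}^\star$, combining the monotonicity of the $\ta$-sequence with the second subproposition produces the sandwich $\signal{x}_n^\prime\le\signal{x}_n^\dprime\le\signal{x}^\star$ in the increasing case and its reversal in the decreasing case. The $\infty$-norm inequality of Definition~\ref{def.no_slower} then drops out by taking the component-wise maximum. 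The main obstacle I anticipate is the careful bookkeeping: keeping the two monotonicity cases and the direction of every inequality straight, and recognizing that the non-negativity of $(\signal{I}-\signal{M})^{-1}-\signal{I}$ is precisely what separates $\ta$ from $T$ and produces the strict acceleration rather than mere equality of the two sequences.
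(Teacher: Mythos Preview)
Your proposal is correct and follows essentially the same architecture as the paper: Part~1 via Lemma~\ref{lemma.shared}, Fact~\ref{subfact.monotone}, and the identity $\ta(\signal{x})-\signal{x}=(\signal{I}-\signal{M})^{-1}(T(\signal{x})-\signal{x})$; Part~2 by induction; Part~3 from the sandwich $\signal{x}_n^\prime\le\signal{x}_n^\dprime\le\signal{x}^\star$. The one noticeable variation is in the inductive step of Part~2: the paper rewrites \refeq{eq.ta} as $\ta(\signal{x}_n^\dprime)=T(\signal{x}_n^\dprime)+\signal{M}(\signal{x}^\dprime_{n+1}-\signal{x}_n^\dprime)$ and uses the already-established monotonicity of $\{\signal{x}_n^\dprime\}$ together with monotonicity of $T$, whereas you use your second identity at $\signal{x}_n^\prime$ (exploiting the assumed monotonicity of $\{\signal{x}_n^\prime\}$) together with monotonicity of $\ta$; both chains yield $\signal{x}_{n+1}^\dprime\ge\signal{x}_{n+1}^\prime$ and are equally valid.
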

\begin{proof}
 We prove the proposition only for monotonically increasing sequences (in each component). The proof for monotonically decreasing sequences can be obtained in a similar fashion by reversing all inequalities. \par 
\begin{enumerate}
\item Recall that, by Lemma~\ref{lemma.shared}, $\ta$ is a standard interference mapping, so, in light of  Fact~\ref{subfact.monotone}, we only need to prove that $\ta(\signal{u})\ge\signal{u}$ if $T(\signal{u})\ge\signal{u}$.

 By assumption, $T(\signal{u})-\signal{u}\ge \signal{0}$ and $\rho(\signal{M})<1$. In particular, by using Fact~\ref{fact.neuman} and non-negativity of $\signal{M}$, the last inequality implies that $(\signal{I}-\signal{M})^{-1}$ is a non-negative matrix. Consequently, from \refeq{eq.alternative}, we deduce:
\begin{align*}
\ta(\signal{u})=(\signal{I}-\signal{M})^{-1}(T(\signal{u})-\signal{u})+\signal{u}\ge\signal{u}.
\end{align*}

\item We show the result by using induction. Assume that $\signal{x}_n^\dprime\ge\signal{x}_n^\prime$ for a given $n\in\Natural$. From the definition of the mapping $\ta$ in \refeq{eq.ta}, we deduce:
\begin{align}
\label{eq.tm}
\ta(\signal{x}_n^\dprime)=\signal{x}^\dprime_{n+1}=T(\signal{x}_n^\dprime)+\signal{M}(\signal{x}^\dprime_{n+1}-\signal{x}_n^\dprime).
\end{align}
We have already proved that $\{\signal{x}_n^\dprime\}_{n\in\Natural}$ is monotonically increasing with the assumptions of the proposition (hence $\signal{x}^\dprime_{n+1}-\signal{x}_n^\dprime \ge \signal{0}$), $\signal{M}$ is a non-negative matrix, and $T$ is a mapping satisfying the monotonicity property of standard interference functions. Using these observations in \refeq{eq.tm}, we verify that:
\begin{align*}
\signal{x}_{n+1}^\dprime=\ta(\signal{x}_n^\dprime) \ge T(\signal{x}_n^\dprime) \ge T(\signal{x}_n^\prime)=\signal{x}_{n+1}^\prime.
\end{align*}
The above arguments are valid, in particular, for $n=1$, because $\signal{x}_1^\dprime=\signal{x}_1^\prime=\signal{u}$ by assumption.

\item First recall that both $\{\signal{x}^\prime_{n}\}_{n\in\Natural}$ and $\{\signal{x}^\dprime_{n}\}_{n\in\Natural}$ converge to the uniquely existing fixed point $\signal{x}^\star\in\mathrm{Fix}(T)$ (Lemma~\ref{lemma.shared} and Fact~\ref{subfact.monotone}). The desired result $\|\signal{x}_n^\dprime-\signal{x}^\star\|_\infty \le \|\signal{x}_n^\prime-\signal{x}^\star\|_\infty$, valid for every $n\in\Natural$, follows directly from Proposition~\ref{proposition.faster}.2. 
\end{enumerate}
\end{proof}

As an immediate consequence of Proposition~\ref{proposition.faster} and Fact~\ref{subfact.monotone}, we have the following.

\begin{Cor}
Assume that $T:\real_{+}^N\to\real_{++}^N$ is a positive concave mapping with $\signal{x}^\star\in\mathrm{Fix}(T)\ne\emptyset$, and denote by $\ta:\real_{+}^N\to\real_{++}^N$ its corresponding accelerated mapping. Then the sequence $\{\ta^n(\signal{0})\}_{n\in\Natural}$ converges faster to $\signal{x}^\star$ than the sequence $\{T^n(\signal{0})\}_{n\in\Natural}$, in the sense of Definition~\ref{definition.faster} (we assume that both sequences start from the vector $\signal{0}$).
\end{Cor}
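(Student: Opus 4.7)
The plan is to deduce the corollary directly from part~3 of Proposition~\ref{proposition.faster}, specialized to the common starting vector $\signal{u}=\signal{0}$. Nothing new needs to be proved; the work consists of verifying the hypotheses of that proposition in this particular case.

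First I would check that the accelerated mapping $\ta$ is well defined. Since $\mathrm{Fix}(T)\neq\emptyset$ by assumption, Proposition~\ref{proposition.spec_radius} gives $\rho(\signal{M})<1$ for the lower bounding matrix $\signal{M}$ of $T$, so $(\signal{I}-\signal{M})^{-1}$ exists by Fact~\ref{fact.neuman} and the definition of $\ta$ in \refeq{eq.ta} applies.

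Next I would verify that the sequence $\{\signal{x}_n^\prime:=T^n(\signal{0})\}_{n\in\Natural}$ is monotonically increasing in each component. Since the codomain of $T$ is $\real_{++}^N$, we have $T(\signal{0})>\signal{0}=\signal{x}_1^\prime$, so the hypothesis ${T}(\signal{x}_1^\prime)\ge\signal{x}_1^\prime$ of Fact~\ref{subfact.monotone} is satisfied; that fact then yields the desired componentwise monotonicity. With this in hand, I would apply Proposition~\ref{proposition.faster} with $\signal{u}=\signal{0}$: since $T$ has a fixed point $\signal{x}^\star$ and $\{\signal{x}_n^\prime\}_{n\in\Natural}$ is monotonically increasing, part~3 of that proposition delivers $\|\ta^n(\signal{0})-\signal{x}^\star\|_\infty\le\|T^n(\signal{0})-\signal{x}^\star\|_\infty$ for every $n\in\Natural$, which is precisely the claim in the sense of Definition~\ref{definition.faster}.

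There is essentially no obstacle here; the only point requiring a line of justification is the observation that starting the iteration at $\signal{0}$ yields a monotonically increasing orbit, which is immediate from positivity of the range of $T$ together with Fact~\ref{subfact.monotone}. Everything else is a direct citation of Proposition~\ref{proposition.faster}.
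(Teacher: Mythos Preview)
Your proposal is correct and follows exactly the approach the paper intends: the corollary is stated as an immediate consequence of Proposition~\ref{proposition.faster} and Fact~\ref{subfact.monotone}, and you have simply spelled out the verifications (well-definedness of $\ta$ via Proposition~\ref{proposition.spec_radius}, and monotonicity of the orbit from $\signal{0}$ via Fact~\ref{subfact.monotone}) that the paper leaves implicit.
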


\begin{remark} Following Yates' arguments \cite{yates95} to prove the convergence of the iteration in Fact.~\ref{subfact.monotone}, we can also argue that the proposed accelerated scheme is expected to be fast when the initial point is arbitrary.  More precisely, assume that $T:\real^N_{+}\to\real_{++}^N$ is a positive concave mapping with a fixed point denoted by $\signal{x}^\star\in\real_{++}^N$. Since this fixed point is strictly positive, for an arbitrary vector $\signal{x}\in\real_{+}^N$ there always exists $\alpha > 1 $ satisfying $\signal{x} \le \alpha \signal{x}^\star$. From Definition~\ref{definition.inter_func}, we verify that $T(\signal{x}) \le T(\alpha \signal{x}^\star) < \alpha T(\signal{x}^\star)=\alpha \signal{x}^\star$. These inequalities imply that (see also Fact.~\ref{subfact.monotone}) i) $T^n(\signal{0}) \le T^n(\signal{x})\le T^n(\alpha \signal{x}^\star)$ for every $n\in\Natural$, ii) the sequence $\{T^n(\alpha \signal{x}^\star)\}_{n\in\Natural}$ is monotonically decreasing, and iii) $\{T^n(\signal{0})\}_{n\in\Natural}$ is monotonically increasing. In other words, each term of the monotone sequences $\{T^n(\signal{0})\}_{n\in\Natural}$ and $\{T^n(\alpha \signal{x}^\star)\}_{n\in\Natural}$ are, respectively, (element-wise) lower and upper bounds for each term of the sequence $\{T^n(\signal{x})\}_{n\in\Natural}$. All the above arguments are also valid if we exchange $T$ by its corresponding accelerated mapping $\ta$, and we note that lower and upper bounding sequences $\{\ta^n(\signal{0})\}_{n\in\Natural}$ and $\{\ta^n(\alpha \signal{x}^\star)\}_{n\in\Natural}$ for the sequence $\{\ta^n(\signal{x})\}_{n\in\Natural}$ converge faster to the fixed point $\signal{x}^\star$ when compared to the lower and upper bounding sequences $\{T^n(\signal{0})\}_{n\in\Natural}$ and $\{T^n(\alpha \signal{x}^\star)\}_{n\in\Natural}$ for the sequence $\{T^n(\signal{x})\}_{n\in\Natural}$. In other words, the sequence produced by $\signal{x}_{n+1}^\prime=\ta(\signal{x}_n^\prime)$ with $\signal{x}_1^\prime=\signal{u}\in\real_{+}^N$ arbitrary has sharper element-wise bounds than the sequence produced by $\signal{x}^\dprime_{n+1}=T(\signal{x}^\dprime_n)$ for the same starting point $\signal{x}_1^\dprime=\signal{u}$.

\end{remark}

\begin{remark} The price we pay to use the accelerated iteration ${\signal{x}}^\prime_{n+1}=\ta(\signal{x}^\prime_n)$ instead of using ${\signal{x}}_{n+1}=T(\signal{x}_n)$ is the need for a matrix-vector multiplication, if $\ta$ is evaluated by using \refeq{eq.alternative} (assuming that the lower bounding matrix is not the zero matrix). Furthermore, a matrix inversion is required (or, for increased numerical stability, a matrix decomposition), but this operation needs to be done only once. One situation where the proposed scheme is particularly useful is when the evaluation of the mapping $T$ is time consuming when compared to the matrix-vector multiplication. In this situation, for all practical purposes, the time to compute $\signal{x}^\prime_n$ or $\signal{x}_n$ is roughly equivalent for a given $n\in\Natural$ sufficiently small. However, for every $n\in\Natural$, $\signal{x}^\prime_n$ is guaranteed to be a better approximation of the fixed point of the mapping $T$ than $\signal{x}_n$. This situation is common in network planning.
\end{remark}

\section{Applications in Network Planning and Optimization}
\label{sect.applications}

We now apply the general results in the previous sections to two concrete estimation problems in LTE networks. First, we consider the load estimation task discussed in \cite{Majewski2010,siomina12,feh2013,renato14SPM}, among other studies. Briefly, the objective is to determine the bandwidth required to satisfy the data rate demand of all users in the network, by assuming that the transmit power of all base stations is given. 

The second application we consider is the reverse of load estimation. For a given load allocation at the base stations, the objective is to estimate the power allocation inducing that load. This reverse problem has been motivated by the study in \cite{ho2015}, which has proved that using all available bandwidth is advantageous from various perspectives, and, in particular, from the perspective of transmit energy savings and interference reduction. That study also proves that there exists a standard interference mapping having as its fixed point the solution of the power estimation problem, and the study in \cite{renato14SIP} has shown that the interference mapping can take the form of a positive concave mapping.

\subsection{Load estimation}
\label{sect.load_planning}
 We consider an LTE network with $M$ base stations and $N$ users represented by elements of the sets $\setm=\{1,\ldots,M\}$ and $\mathcal{N}=\{1,\ldots,N\}$, respectively. The set of users connected to base station $i\in\setm$ is denoted by $\mathcal{N}_i$, and the data rate requirement of user $j\in\setn$ is given by $d_j>0$. The propagation loss between user $j\in\setn$ and base station $i\in\setm$ is denoted by $g_{i,j}>0$. Each base station $i\in\setm$ has $K$ resource units that can be assigned to users, and the transmit power per resource unit for each base station $i\in\setm$ is $p_i>0$. The reliable downlink data rate for each resource unit connecting base station $i\in\setm$ to user $j\in\setn$ is approximated by the following well-established interference-coupling model \cite{Majewski2010,siomina12,renato14SPM,feh2013,MajewskiTurkeHuangEtAl2007Analytical}:
  
\begin{align*}
 \omega_{i,j}(\signal{\nu},\signal{p})=B\log_2\left(1+\dfrac{p_i g_{i,j}}{\sum_{k\in\setm\backslash\{i\}}\nu_k p_k g_{k,j}+\sigma^2}\right),
\end{align*}
where $\sigma^2$ is the noise power per resource unit, $\signal{p}=[p_1,\ldots,p_M]^t$ is the downlink power vector per resource unit, $\signal{\nu}=[\nu_1,\ldots,\nu_M]^t$ is the load vector, and $B$ is the bandwidth per resource unit. Here, the load $\nu_i$ is fraction of the number of resource units in the time-frequency grid that users in the set $\setn_i$ require from base station $i$. For fixed power allocation $\signal{p}\in\real_{++}^M$, the load is the solution to the following system of nonlinear equations \cite{Majewski2010,siomina12,renato14SPM,feh2013}:
\begin{align}
\label{eq.system}
\begin{matrix}
\nu_1 = f_1(\signal{\nu},\signal{p})\\
\vdots \\
\nu_M = f_M(\signal{\nu},\signal{p}),
\end{matrix}
\end{align}
where 
\begin{align*}
\begin{array}{rcl}
f_i:~\real_{+}^M\times\real_{++}^M &\to&\real_{++} \\ (\signal{\nu},\signal{p})&\mapsto&\sum_{j\in\mathcal{N}_i}\dfrac{d_j}{K\omega_{i,j}(\signal{\nu},\signal{p})}.
\end{array}
\end{align*}

Note that, for each fixed $\signal{p}\in\real_{++}^M$ and $i\in\setm$, the function $h_{\signal{p},i}:\real_{+}^M\to\real_{++}:\signal{\nu}\mapsto f_i(\signal{\nu},\signal{p})$ is concave, hence the solution of \refeq{eq.system} with fixed $\signal{p}$ can be obtained by computing the fixed point of the positive concave mapping given by \cite{feh2013,renato14SPM}
\begin{align} 
\label{eq.tp}
T_{\signal{p}}(\signal{\nu}):=[h_{\signal{p},1}(\signal{\nu}),\ldots,h_{\signal{p},M}(\signal{\nu})]^t.
\end{align}
 Therefore, all the theory developed in the previous sections applies to this problem, and, in particular, the novel acceleration schemes for the computation of fixed points. Before proceeding with numerical examples of the acceleration schemes, we revisit known results related to this problem, and we show how the application-agnostic approaches developed in  Sect.~\ref{sect.infimum} and in Sect.~\ref{sect.acceleration} can be used to reach these known results in a more convenient way.

In particular, the authors of \cite{siomina12} construct a matrix by computing the values that the partial derivatives of the functions $h_{\signal{p},1},\ldots,h_{\signal{p},M}$ attain when a given component of the argument $\signal{\nu}$ of these functions goes to infinity. It has been shown in \cite{ho2014data} that the system of nonlinear equations in \refeq{eq.system} has a solution if and only if the spectral radius of this matrix proposed in \cite{siomina12} is strictly less than one. Using the terminology and results in Sect.~\ref{sect.infimum} and in Sect.~\ref{sect.acceleration}, we note that the matrix suggested in \cite{siomina12} is a particular case of a lower bounding matrix in Definition~\ref{definition.lower_bounding} constructed with the technique in Proposition~\ref{proposition.inf_2}. The fact that the spectral radius of this lower bounding matrix gives sufficient and necessary conditions to characterize  the existence of a solution of the nonlinear system is a direct consequence of the application-agnostic results in Proposition~\ref{proposition.spec_radius} and Proposition~\ref{prop.sufficiency}.

To be more precise, we can use \refeq{eq.derivation_supgrad} to construct the lower bounding matrix $\signal{M}_\signal{p}$ of the mapping $T_\signal{p}$ as follows:
\begin{multline*}
\signal{M}_\signal{p}= \\ \left[\begin{matrix} 
\lim_{h\to\infty} g_1^1(\signal{\nu}^\prime+h\signal{e}_1) & \cdots & \lim_{h\to\infty} g_M^1(\signal{\nu}^\prime+h\signal{e}_M) \\
\vdots & \ddots & \vdots \\
\lim_{h\to\infty} g_1^M(\signal{\nu}^\prime+h\signal{e}_1) & \cdots & \lim_{h\to\infty} g_M^M(\signal{\nu}^\prime+h\signal{e}_M) \\
\end{matrix}\right],
\end{multline*}
where $\signal{\nu}^\prime\in\real_{++}^M$ is arbitrary and $g_{k}^i(\signal{\nu})$ is the $k$th component of a supergradient of the function $h_{\signal{p},i}$ at an arbitrary point $\signal{\nu}=[\nu_1,\ldots,\nu_M]^t$.  By noticing that the function $h_{\signal{p},i}$ is differentiable in the interior of its domain, $g_{k}^i(\signal{\nu})$ is simply the partial derivative $\dfrac{\partial}{\nu_k}  h_{\signal{p},i}(\signal{\nu})$ for every  $\signal{\nu}\in\real_{++}^M$. As a result, we can verify that the lower bounding matrix of the mapping $T_\signal{p}$ is given by $\signal{M}_\signal{p}=\mathrm{diag}(\signal{p})^{-1}\signal{M}^\prime {\mathrm{diag}(\signal{p})}$, where 
\begin{align}
\label{eq.low_matrix}
[\signal{M}^{\prime}]_{i,k} = \begin{cases}
0, &\mathrm{if}~~ i= k \\
\sum_{j \in \mathcal{N}_i} \dfrac{\mathrm{ln}(2)  d_j g_{k,j}}{KB g_{i,j}} & \mathrm{otherwise}.
\end{cases}
\end{align}

(We can also obtain \refeq{eq.low_matrix} by constructing the lower bounding matrix with the approach in \refeq{eq.derivation_rec}.)

By Remark~\ref{remark.eigenvalues}, $\rho(\signal{M}_{\signal{p}})<1$ is equivalent to  $\rho(\signal{M}^\prime)<1$, and we note that $\signal{M}^\prime$ does not depend on the power allocation $\signal{p}$, a fact originally stated in \cite{ho2014data}. Therefore, to verify whether the mapping $T_\signal{p}$ has a fixed point by using the results in Proposition~\ref{proposition.spec_radius} and Proposition~\ref{prop.sufficiency}, we can compute $\rho(\signal{M}^\prime)$ instead of $\rho(\signal{M}_{\signal{p}})$. In other words, knowledge of $\rho(\signal{M}^\prime)$ is sufficient to determine whether the system of nonlinear equations in \refeq{eq.system} has a solution, as already stated in \cite{ho2014data} for this particular application.

Having the lower bounding matrix in closed form, we can now proceed to the numerical evaluations of the novel acceleration schemes. In the simulations we show here, we compare the accuracy of the load estimates generated by the standard iteration $\signal{\nu}_{n+1}=T_{\signal{p}}(\signal{\nu}_n)$ with its accelerated version $\signal{\nu}^\prime_{n+1}={T_\signal{p}}_\mathrm{A}(\signal{\nu}^\prime_n)$. Table~\ref{table.simulation} lists the main parameters of the network. 

 \begin{table}[ht]
     \caption{Network parameters of the simulation}
     \label{table.simulation}
     \centering    
     \begin{tabular}{ l  l }
         \hline
         Parameter & Value \\ \hline
         \hline
         Carrier frequency & 900 MHz \\
         Number of resource units ($K$) & 25 \\
         Transmit power per resource unit (${p}_i$, $\forall i\in\setm$) & 1.6W\\         
         System bandwidth ($K\cdot B$) & 5 MHz \\
         Noise power spectral density & -145.1 dBm/Hz \\
         Propagation model & Okumura-Hata, urban \\
         Antenna height of base stations & 30m \\
         Antenna height of the users & 1.5m \\
         Number of users ($N$) & 200 \\ 
         Number of base stations ($M$)& 25 \\
         Data rate of each user ($d_j$, $\forall j\in\setn$) &  768 kbps \\
         Dimension of the field & $2500$m$\times 2500$m\\
         User distribution & Uniformly distributed at random \\
         Base station distribution & Uniformly distributed \\
         \hline
     \end{tabular}
 \end{table}

 The figure of merit used in the comparisons is the expected normalized mean error (NME), which we define by
 \begin{align} 
 \label{eq.nmse}
 {e}_{\mathrm{NME}}(\signal{\nu}):=E[\|\signal{\nu}-\signal{\nu}^\star\|/\|\signal{\nu}^\star\|],
 \end{align}
  where $\signal{\nu}^\star\in\mathrm{Fix}(T_{\signal{p}})$. We approximate the expectation operator by averaging the results of 100 runs of the simulation, and in each simulation the positions of the users (and hence the propagation loss) are the random variables. All iterations start from the zero vector, and networks where the corresponding concave mapping does not have a fixed point are discarded. Therefore, the expectation in \refeq{eq.nmse} is conditioned to the fact that spectral radius of the lower bounding matrix is strictly smaller than one. 

Fig.~\ref{fig.results_load_planning} shows results obtained by using the iterative scheme in Fact~\ref{subfact.monotone} with the original mapping $T_{\signal{p}}$ and with its proposed accelerated version ${T_\signal{p}}_\mathrm{A}$. We verify that the mapping ${T_\signal{p}}_\mathrm{A}$ requires fewer iterations than ${T_\signal{p}}$ to obtain a given numerical precision, which is an expected result by considering Proposition~\ref{proposition.faster}.

\begin{figure}
 	\centering
 		\includegraphics[width=\linewidth]{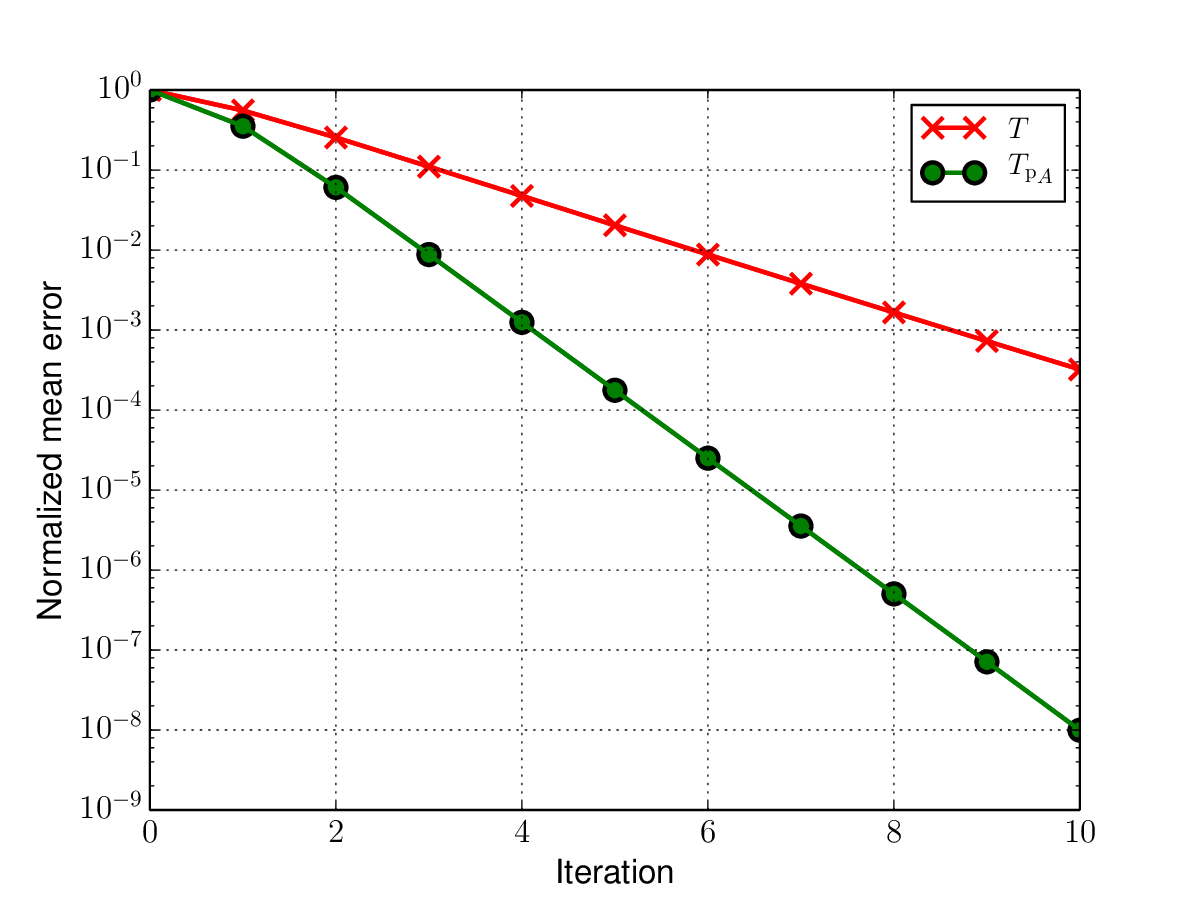}
	\caption{NME of the load estimate as a function of the number of iterations. Confidence intervals (95\%) have been computed, but they are not visible in the figure.} 
 	\label{fig.results_load_planning}
\end{figure}

\subsection{Power estimation}
\label{sect.power_planning}

We now turn our attention to the problem of power estimation in LTE networks. The objective is to solve \refeq{eq.system} for $\signal{p}=[p_1,\ldots,p_M]^t$ with the load $\signal{\nu}\in\real_{++}^M$ being the fixed parameter. It is shown in \cite{renato14SIP} that the solution of this nonlinear system is the fixed point of the positive concave mapping given by ${T}_{\signal{\nu}}(\signal{p}):=[{h}_{\signal{\nu},1}(\signal{p}),\ldots,h_{\signal{\nu},M}(\signal{p})]^t$, where 
 \begin{align*}
  {h}_{\signal{\nu},i}(\signal{p}):=\begin{cases}
       \dfrac{p_i}{\nu_i} \sum_{j\in\setn_i} \dfrac{d_j}{K\omega_{i,j}(\signal{\nu},\signal{p})},
 \quad \mathrm{if}~~ p_i\ne 0  \\ 
 \sum_{j\in\setn_i} \dfrac{d_j\ln 2}{KBg_{i,j}\nu_i}\left(\sum_{k\in\setm\backslash\{i\}}\nu_k p_k g_{k,j}+\sigma^2\right), \\ \qquad\qquad \mathrm{otherwise.}
      \end{cases}
 \end{align*}
 
 By using \refeq{eq.derivation_rec} to construct the lower bounding matrix $\signal{M}_{\signal{\nu}}$ of the mapping ${T}_{\signal{\nu}}$, we deduce:
 
 \begin{multline*}
 \signal{M}_\signal{\nu}= \\ \left[\begin{matrix} 
 \lim_{x\to 0^+} x {h}_{\signal{\nu},1}(x^{-1}\signal{e}_1) & \cdots & \lim_{x\to 0^+} x{h}_{\signal{\nu},1}(x^{-1}\signal{e}_M) \\
 \vdots & \ddots & \vdots \\
  \lim_{x\to 0^+} x{h}_{\signal{\nu},M}(x^{-1}\signal{e}_1) & \cdots & \lim_{x\to 0^+} x{h}_{\signal{\nu},M}(x^{-1}\signal{e}_M) 
 \end{matrix}\right] \\
 = \mathrm{diag}(\signal{\nu})^{-1} \signal{M}^\prime \mathrm{diag}(\signal{\nu}),
 \end{multline*}
 where $\signal{M}^\prime$ is the same matrix defined in \refeq{eq.low_matrix}. (The same result can be obtained by using Proposition~\ref{proposition.inf_2} to construct the lower bounding matrix, but here applying Proposition~\ref{proposition.inf_1} is easier than applying Proposition~\ref{proposition.inf_2}.)
 
 From Proposition~\ref{proposition.spec_radius} and the definition of $\signal{M}_{\signal{\nu}}$, we conclude that a necessary condition for existence of the fixed point of $T_{\signal{\nu}}$ is $\rho(\signal{M}^\prime)<1$,  which is the same requirement for the existence of the fixed point of $T_\signal{p}$ in \refeq{eq.tp}. However, there is a fundamental difference between these two mappings. As proved in \cite{ho2014data}, $\rho(\signal{M}^\prime)<1$ (note: this spectral radius does not depend on $\signal{\nu}$) is both a sufficient and necessary condition for the existence of the fixed point of $T_{\signal{p}}$. In contrast, the study in \cite{ho2015} has shown that the existence of the fixed point of $T_{\signal{\nu}}$ also depends on $\signal{\nu}$. 
 Therefore, ${T}_{\signal{\nu}}$ is an example of a mapping proving that the converse of Proposition~\ref{proposition.spec_radius} does not hold in general.
 
We now turn the attention to the acceleration schemes in this particular application. We use the same network considered in the load estimation task. The desired load $\signal{\nu}$ is obtained by solving \refeq{eq.system} with the power fixed to the value shown in Table~\ref{table.simulation}. Then we solve the reverse problem; we compute the power shown in Table~\ref{table.simulation} by using the standard iteration $\signal{p}_{n+1}=T_{\signal{\nu}}(\signal{p}_n)$ and its accelerated version $\signal{p}^\prime_{n+1}={T_\signal{\nu}}_\mathrm{A}(\signal{p}^\prime_n)$. Both algorithms start from the zero vector. The normalized mean error is again used as the figure of merit (which in this application is defined by replacing the load vector by the power vector in \refeq{eq.nmse}). We can see in Fig.~\ref{fig.results_power_planning} that in this application the proposed acceleration scheme once again provides us with clear advantages over the standard iterative approach, in accordance to the analysis in Sect.~\ref{sect.algorithm}. 

\begin{figure}
 	\centering
 		\includegraphics[width=\linewidth]{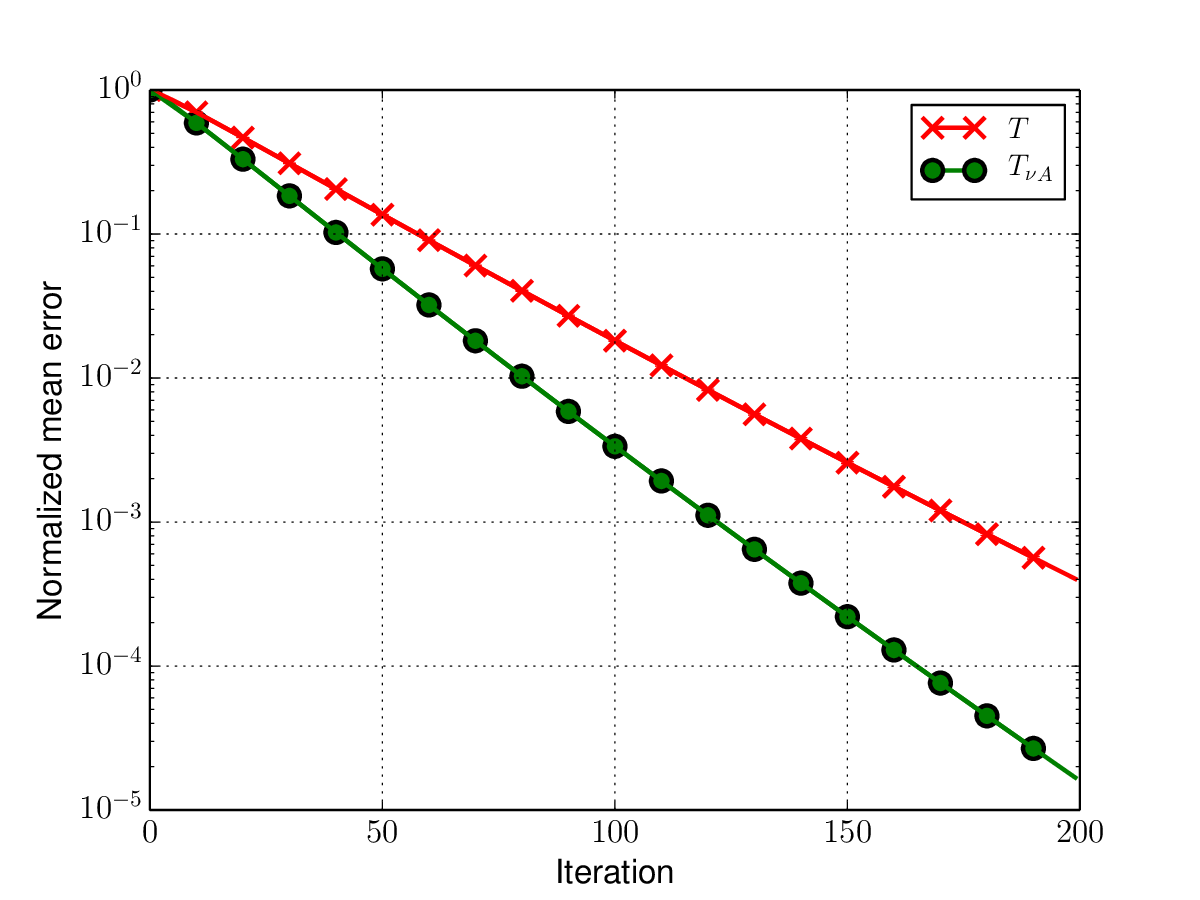}
	\caption{NME of the power estimate as a function of the number of iterations. Confidence intervals (95\%) have been computed, but they are not visible in the figure.} 
 	\label{fig.results_power_planning}
\end{figure}

\section{Conclusions}
We have shown that the results in \cite{siomina12} for the construction of lower bounding matrices in a very particular application domain can be generalized to a large class of positive concave mappings where even differentiability is not required. More specifically, we proved that positive concave mappings with nonempty fixed point set can be associated with a non-negative lower bounding matrix having spectral radius strictly smaller than one. By imposing additional assumptions on the mapping, having spectral radius strictly smaller than one also implies the existence of the fixed point of the concave mapping. We also demonstrated that the lower bounding matrix can be constructed with two simple and equivalent methods, and this matrix can be combined with its generating concave mapping to build a new mapping that preserves the fixed point. The standard fixed point iterations applied to this new mapping typically requires fewer evaluations of the original mapping to obtain an estimate of the fixed point for any given precision. The additional computational complexity of this novel approach is very modest. In the tasks of load and power estimation in LTE networks, where we are mostly interested in the precision of the estimates after a limited number of iterations, numerical examples show that the improvement in convergence speed obtained with the proposed method can be substantial. \par

\bibliographystyle{IEEEtran}
\bibliography{IEEEabrv,references}

\end{document}